\newcommand{\IGNORE}[1]{\fixme{IGNORED: #1}}
\renewcommand{\IGNORE}[1]{}
\newif\ifcomments
\newcommand{\Daniel}[1]{\ifcomments\textcolor{green}{\footnotesize DK: #1}\fi}
\newcommand{\Jade}[1]{\ifcomments\textcolor{blue}{\footnotesize JA: #1}\fi}
\newcommand{\myprog}{Prog.~}
\newcommand{\myalg}{Alg.~}
\newcommand{\myfig}{Fig.~}
\newcommand{\mythm}{Thm.~}
\newcommand{\mylem}{Lem.~}
\newcommand{\mysec}{Sec.~}
\newcommand{\wrt}{w.r.t.\xspace}
\newcommand{\eg}{e.g.\xspace}
\newcommand{\ie}{i.e.\xspace}
\newcommand{\cf}{cf.\xspace}
\let\mo\mathord
\let\implies\Rightarrow
\let\cimplies\Leftarrow
\newcommand{\transc}[1]{\mathop{{#1}^{+}}}
  \newcommand{\dfnshort}[2]{{#1}\triangleq{#2}}
  \newcommand{\dfn}[2]{{#1}~\triangleq~{#2}}
  \newcommand{\re}{r}
  \let\setset\setbf
  \newcommand{\evts}{\setset{E}}
  \newcommand{\tr}{\operatorname{tid}}
  \newcommand{\loc}{\operatorname{addr}}
  \newcommand{\val}{\operatorname{val}}
  \newcommand{\VX}{\operatorname{valid}}
\newcommand{\stacklabel}[1]
{\stackrel{\smash{\scriptstyle\textnormal{#1}}}}
  \newcommand{\po}{\operatorname{\textsf{po}}}
  \newcommand{\pobr}{\operatorname{\textsf{po-br}}}
  \newcommand{\dep}{\textsf{dp}}
  \newcommand{\rf}{\textsf{rf}}
  \newcommand{\grf}{\operatorname{\textsf{grf}}}
  \newcommand{\grfe}{\operatorname{\textsf{grfe}}}
  \newcommand{\rfe}{\textsf{rfe}}
  \newcommand{\rfi}{\textsf{rfi}}
  \newcommand{\ws}{\textsf{ws}}
  \newcommand{\ews}{\textsf{wse}}
  \newcommand{\fr}{\textsf{fr}}
  \newcommand{\efr}{\textsf{fre}}
  \newcommand{\ab}{\textsf{ab}}
  \newcommand{\fenced}{\textsf{fence}}
  \newcommand{\poi}[1]{\textsf{po\ensuremath{_{#1}}}}
  \newcommand{\ppo}{\operatorname{{\textsf{ppo}}}}
  \newcommand{\pio}{\operatorname{\textsf{po-loc}}}
  \newcommand{\ghb}{\operatorname{\textsf{ghb}}}
  \newcommand{\wf}{\operatorname{wf}}
  \newcommand{\acyclic}{\operatorname{acyclic}}
  \newcommand{\uniproc}{\operatorname{uniproc}}
  \newcommand{\thin}{\operatorname{thin}}
  \newcommand{\consensus}{\operatorname{consensus}}
    \newcommand{\s}{\operatorname{s}}
  \newcommand{\WW}{\operatorname{WW}}
  \newcommand{\RW}{\operatorname{RW}}
  \newcommand{\RR}{\operatorname{RR}}
  \newcommand{\WR}{\operatorname{WR}}
\newcommand{\import}{\operatorname{import}}
\let\prog\textsf
\let\as\texttt
\let\ltest\textbf
\newcommand{\proc}[1]{\ensuremath{P_{#1}}}
\newlength{\fmtlength}
\newcommand{\lwsync}{\operatorname{{\sf lwsync}}}
\newcommand{\lb}[4]{\((#1)\)\,#2#3#4}
\newcommand{\Cands}{relations}
\newcommand{\llb}[3]{\makebox[.6cm]{#1}#2#3}
\newcommand{\safes}{\operatorname{safe}}
\newcommand{\wpath}{\operatorname{path}}
\newcommand{\ssa}{\ensuremath{\text{\textsf{ssa}}}\xspace}
\newcommand{\parord}{\ensuremath{\text{\textsf{pord}}}\xspace}
\newcommand{\chains}{\ensuremath{\text{\textsf{chains}}}\xspace}
\newcommand{\reads}{\ensuremath{\text{\small{\textsf{reads}}}}\xspace}
\newcommand{\writes}{\ensuremath{\text{\small{\textsf{writes}}}}\xspace}
\newcommand{\fences}{\ensuremath{\text{\textsf{fences}}}\xspace}
\newcommand{\prf}{\ensuremath{\text{\textsf{prf}}}\xspace}
\newcommand{\rfsome}{\ensuremath{\text{\textsf{rf\_some}}}\xspace}
\newcommand{\aes}{{\sf ses}\xspace}
\newcommand{\clock}[1]{\ensuremath{\text{\textsf{clock}}_{#1}}}
\newcommand{\HB}{\operatorname{HB}}
\newcommand{\en}{\operatorname{en}}
\newcommand{\link}{\operatorname{link}}
\newcommand{\guard}{\operatorname{g}}
\newcommand{\prefix}{\operatorname{pref}}
\newcommand{\card}{\operatorname{card}}
\newcommand{\conc}{\operatorname{conc}}
\newcommand{\revdel}{\operatorname{rd}}
\newcommand{\nprop}{\textsf{prop}}
\newcommand{\nc}{\operatorname{nc}}
\newcommand{\ac}{\operatorname{ac}}
\newcommand{\bc}{\operatorname{bc}}
\newcommand{\sval}{\ensuremath{\text{\small{\textsf{val}}}}\xspace}
\newcommand{\Clocks}{\operatorname{\textsf{C}}}
\newcommand{\Vals}{\operatorname{\textsf{V}}}
\newcommand{\WRv}{\operatorname{\textsf{WR}}}
\newcommand{\insta}{\operatorname{\textsf{inst}}}
\newcommand{\notrelax}{\operatorname{\textsf{not\_relax}}}
\newcolumntype{Y}{@{}r@{\,}X}
\newcommand{\instab}[2]{\ \(#2\)  & \as{#1}}
\newcommand{\pset}[2]{\(\as{#1} \leftarrow \as{#2}\)}
\newcommand{\pstore}[2]{\pset{#2}{#1}}
\newcommand{\pload}[2]{\pset{#1}{#2}}
\newcommand{\plwarx}[3]{\as{lwarx #1,#2,#3}}
\newcommand{\pstwcx}[3]{\as{stwcx. #1,#2,#3}}
\newcommand{\pstw}[3]{\as{stw #1,#2,#3}}
\newcommand{\pbne}[1]{\as{bne #1}}
\newcommand{\pcmp}[2]{\as{cmpw #1,#2}}
\newcommand{\haut}{\rule{0ex}{2ex}}
\newcommand{\bas}{\rule[-1ex]{0.5ex}{0ex}}
\newcommand{\Rmw}[1][40ex]
{\begin{tabular}{rl}
\haut \instab{\as{loop:}}{} \\
\instab{\plwarx{r1}{0}{r5}}{(a_1)} \\
\instab{[\dots]}{}\\
\instab{\pstwcx{r2}{0}{r5}}{(a_2)} \\
\bas\instab{\pbne{loop}}{(b)} \\ 
\end{tabular}}
\newcommand{\Atom}[1][40ex]
{\begin{tabular}{rl}
\haut \instab{\plwarx{r1}{0}{r5}}{(a_1)} \\
\instab{[\dots]}{}\\
\bas\instab{\pstwcx{r2}{0}{r5}}{(a_2)} \\
\end{tabular}}
\newcommand{\Lo}[1][40ex]
{\begin{tabular}{rl}
\haut\instab{\as{loop:}}{} \\
\instab{\plwarx{r6}{0}{r3}}{(a_1)} \\
\instab{\pcmp{r4}{r6}}{(b)} \\
\instab{\pbne{loop}}{(c)} \\
\instab{\pstwcx{r5}{0}{r3}}{(a_2)} \\
\instab{\pbne{loop}}{(d)} \\
\instab{\as{isync}}{(e)} \\
\bas\instab{[\dots]}{}\\ 
\end{tabular}}
\newcommand{\ULo}[1][40ex]
{\begin{tabular}{rl}
\instab{[\dots]}{} \\
\haut\instab{\as{lwsync}}{(f)} \\ 
\bas\instab{\pstw{r4}{0}{r3}}{(g)} \\
\end{tabular}}
\newcommand{\Iriw}[1][60ex]
{\begin{tabularx}{#1}{Y|Y|Y|Y}
\multicolumn{2}{c|}{\haut\proc{0}} &
\multicolumn{2}{c|}{\proc{1}} &
\multicolumn{2}{c|}{\proc{2}} &
\multicolumn{2}{c}{\proc{3}} \\ \hline
\instab{\pload{r1}{x}}{(a)}\haut &
\instab{\pload{r3}{y}}{(c)} &
\instab{\pstore{1}{x}}{(e)} &
\instab{\pstore{1}{y}}{(f)}\\
\bas\instab{\pload{r2}{y}}{(b)} &
\instab{\pload{r4}{x}}{(d)} &
\instab{}{} &
\instab{}{}
\\ \hline
\multicolumn{8}{l}{\haut{}Allowed? \as{r1=1}; \as{r2=0}; \as{r3=1};
\as{r4=0};}
\end{tabularx}}
\newcommand{\AOE}[1][40ex]
{\begin{tabularx}{#1}{Y|Y}
\multicolumn{2}{c|}{\haut\proc{0}} &
\multicolumn{2}{c}{\haut\proc{1}} \\ \hline
\haut\instab{\pstore{1}{x}}{(a)} & \instab{\pstore{1}{y}}{(c)} \\
\bas\instab{\pload{r1}{y}}{(b)} & \instab{\pload{r2}{x}}{(d)} \\ \hline 
\multicolumn{4}{l}{Allowed? \as{r1=0}; \as{r2=0} } \\
\end{tabularx}}
\mathchardef\ordinarycolon\mathcode`\:
\begin{document}
\makeatletter
\gdef\thelstlisting{\@arabic\c@lstlisting}
\makeatother



\title{Partial Orders for Efficient BMC of~Concurrent~Software}

\author{Jade Alglave\inst{1} \and Daniel Kroening\inst{2}
\and Michael Tautschnig\inst{2,3}} 
\institute{University College London \and University of Oxford \and Queen Mary,
University of London}

\pagestyle{plain}

\maketitle

\begin{abstract}

The vast number of interleavings that a concurrent program can have is
typically identified as the root cause of the difficulty of automatic
analysis of concurrent software.  Weak memory is generally believed to make
this problem even harder. 
We address both issues by modelling programs' executions with partial orders
rather than the interleaving semantics (SC).  We implemented a software
analysis tool based on these ideas.  It scales to programs of sufficient size
to achieve first-time formal verification of non-trivial concurrent systems
code over a wide range of models, including SC, Intel x86 and IBM Power.

\end{abstract}

%
%




\section{Introduction}
\label{sec:introduction}


Automatic analysis of concurrent programs is a practical challenge. Hardly any
of the very few existing tools for concurrency will verify a thousand lines of
code~\cite{dkw08}.  Most papers name the number of \emph{thread interleavings}
that a concurrent program can have as a reason for the difficulty.
This view presupposes an execution model, namely 
\emph{Sequential Consistency} (SC)~\cite{lam79}, 
where an execution is a
\emph{total order} 
(more precisely an interleaving) of the instructions from different threads.
The choice of SC as the execution model poses at least two problems.

First, the large number of interleavings modelling the executions of a program
makes their enumeration intractable. \emph{Context bounded}
methods~\cite{qr05,mq05,lr09,eqr11} (which are unsound in general) and
\emph{partial order reduction}~\cite{pel93,god96,fg05} can reduce the number of
interleavings to consider, but still suffer from limited scalability.
Second, modern multiprocessors (\eg, Intel x86 or IBM Power) serve as a
reminder that SC is an inappropriate model.  Indeed, the \emph{weak memory
models} implemented by these chips allow more behaviours than SC.

We address these two issues by using \emph{partial orders} to model executions,
following~\cite{pra86,win86,bf94,pp96}. 
We also aim at practical verification of concurrent
programs~\cite{cks05,cf11,eqr11}.
Rarely have these two communities met.  Notable exceptions
are~\cite{sw10,sw11}, forming with~\cite{bam07} the closest related work. We
show that the explicit use of partial orders generalises these works to
concurrency at large, from SC to weak memory, without affecting efficiency.  

Our method is as follows: we map a program to a formula consisting of two
parts.  The first conjunct describes the data and control flow for each
thread of the program; the second conjunct describes the concurrent
executions of these threads as partial orders.  We prove that for
any satisfying assignment of this formula there is a valid execution \wrt
our models; and conversely, any valid execution gives rise to a satisfying
assignment of the formula.

Thus, given an analysis for sequential programs (the per-thread conjunct), we
obtain an analysis for concurrent programs.  For programs with bounded loops,
we obtain a sound and complete model checking method.  Otherwise, if the
program has unbounded loops, we obtain an exhaustive analysis up to a given
bound on loop unrollings, \ie, a bounded model checking method.

To experiment with our approach, we implement a \emph{symbolic decision
procedure} answering reachability queries over concurrent C programs \wrt~a
given memory model.  We support a wide range of models, including SC, Intel x86
and IBM Power. To exercise our tool \wrt~weak memory, we verify $4500$~tests
used to validate formal models against IBM Power chips~\cite{ssa11,mms12}.  
Our tool is the first to handle the subtle \emph{store atomicity
relaxation}~\cite{ag96} specific to Power and ARM.  

\IGNORE{MICHAEL: back up the following
comment with experiments: On the flip side, the fewer constraints you have, the
harder it is for the tool to prove that a satisfying assignment does not exist
(when there is no bug) (potentially).  Could mention here that you have
experiments to back up this statement in both the cases where there is a bug,
and the cases where there isn't.  }
We show that mutual exclusion is not violated in a queue mechanism of the
Apache HTTP server software. 
We confirm a bug in the worker synchronisation mechanism in PostgreSQL,
and that adding two fences fixes the problem. We
verify that the Read-Copy-Update mechanism of the Linux kernel 
preserves data consistency of the object it is protecting. 
For all examples we perform the analysis for a wide range of memory
models, from SC to IBM Power via Intel x86.

We provide the sources of our tool, our experimental logs and our benchmarks at
\url{http://www.cprover.org/wpo}. 
\section{Related Work\label{rw}}

We start with models of concurrency, then review tools proving the absence of
bugs in concurrent software, organised by techniques.

\paragraph{Models of concurrency}
Formal methods traditionally build on Lamport's SC~\cite{lam79}.
A year earlier, Lamport defined \emph{happens-before models}~\cite{lam78}.
The happens-before order is the smallest partial order containing the program
order and the relation between a write, and a read from this write. 

These models seem well suited for analyses relative to synchronisation,
\eg,~\cite{eqt07,ffy08,kw10}, because the relations they define are
oblivious to the implementation of the idioms.  Despite happens-before being
a partial order, most of~\cite{lam78} explains how to linearise it.  Hence,
this line of work often relies on a notion of total orders.  Partial orders,
however, have been successfully applied in verification in the context of
Petri nets~\cite{m92}, which have been linked to software verification
in~\cite{kkw12} for programs with a small state space. 

We (and~\cite{bm08:cav,gg08,sw10,sw11}) reuse the \emph{clocks} of~\cite{lam78}
to build our orders.  Yet we do not aim at linearisation or a transitive
closure, as this leads to a polynomial overhead of redundant constraints.


Our work goes beyond the definition and simulation of memory
models~\cite{gys04,hr06,tvd10,ssa11,mms12}.  
Implementing an executable version of the memory models is an important step,
but we go further by studying the validity of systems code in C (as opposed to
assembly or toy languages) \wrt~both a given memory model and a property.

The style of the model influences the verification process.  Memory models
roughly fall into two classes: operational and axiomatic.  The
operational style models executions via interleavings, with transitions
accessing buffers or queues, in addition to the memory (as on SC).  Thus
this approach inherits the limitations of interleaving-based verification. 
For example,~\cite{abp11} (restricted to Sun Total Store Order, TSO) bounds
the number of context switches.


Other methods use operational specifications of TSO, Sun Partial Store Order
(PSO) and Relaxed Memory Order (RMO) to place fences in a
program~\cite{kvy10,kvy11,lnp12}.  Abdulla et al.~\cite{aac12} address this
problem on an operational TSO, for finite state transition systems instead of
programs.  The methods of~\cite{kvy10,kvy11} have, in the words
of~\cite{lnp12}, ``severely limited scalability''.  The dynamic technique
presented in~\cite{lnp12} scales to $771$ lines but does not aim to be sound:
the tool picks an invalid execution, repairs it, then iterates.


Axiomatic specifications categorise behaviours by constraining relations on
memory accesses. Several hardware vendors adopt this
style~\cite{sparc:94,alpha:02} of specification; we build on the axiomatic framework
of~\cite{ams12} (cf. \mysec\ref{sec:model}).  CheckFence~\cite{bam07} also uses
axiomatic specifications, but does not handle the store atomicity relaxation of
Power and ARM.

%
%
%
\Daniel{What's new in terms of theory? Differences to tools should go into
the experimental section, or are maybe motivational. POPL people don't care
so much about new tools; they care about new theory supported by a tool.}
\Jade{In fact this whole section should be read as motivational. We want to highlight that existing techniques don't do so well.}


\begin{wrapfigure}[15]{r}{0.42\columnwidth} 
\vspace*{-4em}
\begin{lstlisting}[caption={Fibonacci from \cite{b12}},label={prog:fib},
  numbers=none,morekeywords={start_thread,assert}]
#define N 5
int x=1, y=1;

void thr1() {                          
  for(int k=0; k<N; ++k)               
    x=x+y; }                           

void thr2() {
  for(int k=0; k<N; ++k)
    y=y+x; }

int main() {
  start_thread(thr1);   
  start_thread(thr2);
  assert(x<=144 && y<=144);
  return 0; }
\end{lstlisting}
\end{wrapfigure}
\paragraph{Running example\label{rw:running}}

Below we use~\myprog\ref{prog:fib} (from the TACAS Software Verification
Competition~\cite{b12}) as an illustration.  The shared
variables~\lstinline{x} and~\lstinline{y} can reach
the~$(2\text{\lstinline{N}})$-th Fibonacci number, depending on the
interleaving of \lstinline{thr1} and \lstinline{thr2}. \myprog\ref{prog:fib}
permits at least $\mathcal{O}(2^{6\text{N}})$ interleavings of \lstinline{thr1}
and \lstinline{thr2}. In each loop iteration, \lstinline{thr1}
reads~\lstinline{x} and then~\lstinline{y}, and then writes~\lstinline{x};
\lstinline{thr2} reads~\lstinline{y} and~\lstinline{x}, and then
writes~\lstinline{y}.  Each interleaving of these two writes yields a unique
sequence of shared memory states.
Swapping, \eg,  the read of~\lstinline{y} in \lstinline{thr2} with the write
of~\lstinline{x} in \lstinline{thr1} does not affect the memory states, but
swapping the accesses to the same address does.

\paragraph{Interleaving tools}
Traditionally, tools are based on interleavings, and do not consider weak
memory. By contrast, we handle weak memory by reasoning in terms of
partial orders.\Daniel{The sentence about us is too mysterious.}\Daniel{I am
not sure that the following survey adds much value; it's too high-level to
provide any new insights. You have to ask yourself what the reader who doesn't
already know this will get from the rest of the paper.}\Jade{Well, I'm a reader
who doesn't know this already, and what I got from it was an idea of what
exists, and why the existing techniques struggle with concurrency.}

\emph{Explicit-state model checking} performs a search over states of a
transition system. SPIN~\cite{h97}, VeriSoft~\cite{g97} and Java
PathFinder~\cite{hp00,jys12} implement this approach; they adopt various forms
of \emph{partial order reduction}~(POR) to cope with the number of
interleavings.

POR reduces soundly the number of interleavings to
study~\cite{pel93,god96,fg05} by observing that a partial order gives rise to a
class of interleavings~\cite{maz89}, then picking only one interleaving in each
class.  \myprog\ref{prog:fib} is an instance where the effect of POR is
limited. We noted in \mysec\ref{rw:running} that amongst the
$\mathcal{O}(2^{6\text{N}})$ interleavings permitted by \myprog\ref{prog:fib},
only the interleavings of the writes give rise to unique sequences of states.
Hence distinct interleavings of the threads representing the same interleavings
of the writes are candidates for reduction. POR reduces the number of
interleavings by at least $2^{2\text{N}}$, but $\mathcal{O}(2^{4\text{N}})$
interleavings remain.

Explicit-state methods may fail to cope with large state spaces, even in a
sequential setting. Symbolic encodings~\cite{bcm90} can help, but the state
space often needs further reduction using, \eg, \emph{bounded model checking}
(BMC)~\cite{bcc99} or \emph{predicate abstraction}~\cite{gs97}.
These techniques may again also be combined with POR.
ESBMC~\cite{cf11} implements BMC. An instance of \myprog\ref{prog:fib} has a
fixed~\lstinline{N}, \ie, bounded loops. Thus BMC with~\lstinline{N} as bound is
sound and complete for such an instance.  ESBMC verifies \myprog\ref{prog:fib}
for \lstinline{N = 10} within $30$\,mins (\cf
\mysec\ref{sec:experiments}, \myfig\ref{fig:all-tools}).
SatAbs~\cite{cks05} uses predicate abstraction in a CEGAR loop; it completes no
more than \lstinline{N = 3} in $30$~mins as it needs multiple predicates per
interleaving, resulting in many refinement iterations. 
Our approach easily scales to, \eg,
\lstinline{N = 50}, in less than $20$\,s, and more than \lstinline{N=300}
within $30$\,mins,
as we build only a polynomial number of constraints, at worst cubic in the
number of accesses to a given shared memory address.

\paragraph{Non-interleaving tools} Another line of tools is not based on
interleavings.  The existing approaches do not handle weak memory and are
either incomplete (\ie, fail to prove the absence of a bug) or unsound (\ie,
might miss a bug due to the assumptions they make). 


\emph{Thread-modular reasoning}~\cite{j83,ffq02,fqs02,fq03,gpr11} is sound, but
usually incomplete.  Each read presumes guarantees about the values provided
by the environment.  Empty guarantees amount to fully non-deterministic
values, thus this is a trivially sound approach.
Our translation of
\mysec\ref{sec:c-to-aes} corresponds to empty guarantees.  The constraints of
\mysec\ref{sec:aes-to-ces}, however, make our encoding complete.  
\Daniel{The
above is a bad mixture of related work and contribution. Expect people not to
read the related work section.}

In \myprog\ref{prog:fib}, if we guarantee \lstinline{x<=144 && y<=144},
the problem becomes trivial, but finding this guarantee automatically is
challenging.  Threader~\cite{gpr11} fails for \lstinline{N=1}
(\cf \mysec\ref{sec:experiments}, \myfig\ref{fig:all-tools}).

Context bounded methods fix an arbitrary bound on \emph{context switches}
\cite{qr05,mq05,lr09,eqr11}.  This supposes that most bugs happen with few
context switches. Our method does not make this restriction. Moreover, we
believe that there is no obvious generalisation of these works to weak memory,
other than instrumentation as~\cite{abp11} does for TSO, \ie, adding
information to a program so that its SC executions simulate its weak ones. We
used our tool in SC mode, and applied the instrumentation of~\cite{abp11} to
it. On average, the instrumentation is $9$~times more costly
(\cf~\mysec\ref{sec:experiments}, \myfig\ref{fig:all-tools}). 
   
In \myprog\ref{prog:fib}, we need at least \lstinline{N} context switches to
disprove the assertion \lstinline{assert(x<=143 && y<=143)} (or any upper bound
to \lstinline{x} and \lstinline{y} that is the $(2\text{\lstinline{N}})$-th
Fibonacci number minus~$1$).  The hypothesis of the approach (\ie, small
context bounds suffice to find a bug) does not apply here; Poirot fails for
\lstinline{N}$\geq 1$ (cf \mysec\ref{sec:experiments},
\myfig\ref{fig:all-tools}).

Our work relates the most to~\cite{bam07,gg08,sw10,sw11}; we
discuss~\cite{gg08} below and detail~\cite{bam07,sw10,sw11} in
\mysec\ref{sec:cf-s-w-comparison}.\IGNORE{MICHAEL: we don't compare to gg08,
so either we should, or we should rephrase this sentence} These works use
axiomatic specifications of SC to compose the distinct threads.\IGNORE{jade:
watch out, we have to understand gg08 a bit more; maybe their token stuff is a
key} CheckFence~\cite{bam07} models SC with total orders and transitive closure
constraints; \cite{sw10,sw11} use partial orders like us.  \cite{sw10,sw11}
note redundancies of their constraints, but do not explain them; our semantic
foundations (\mysec\ref{sec:model}) allow us both to explain their redundancies
and avoid them (\cf  \mysec\ref{sec:cf-s-w-comparison}).

The encodings of \cite{bam07,sw10,sw11} are
$\mathcal{O}(\text{\lstinline{N}}^3)$ for~$N$ shared memory accesses \emph{to
any address};
\cite{gg08} is quadratic, but in the
number of threads times the number of per-thread transitions, which may include
arbitrary many local accesses.
Our encoding is $\mathcal{O}(\text{\lstinline{M}}^3)$,
with \lstinline{M} the
maximal number of events for a \emph{single address}.
By contrast, the encodings of \cite{gg08,sw10,sw11}
quantify over all addresses. \myprog\ref{prog:fib} has two addresses only, but
the difference is already significant: $(6\lstinline{N})^3$ for
\cite{bam07,gg08,sw10,sw11} vs. $2\times(3\lstinline{M})^3$ in our case, \ie $1/4$ of
the constraints (\cf\mysec\ref{sec:experiments}, \myfig\ref{fig:facts} for
other case studies).

\section{Context: Axiomatic Memory Model\label{sec:model}}

We use the framework of~\cite{ams12}, which provably embraces several
\emph{architectures}: SC~\cite{lam79}, Sun TSO (\ie the x86
model~\cite{oss09}), PSO and RMO
, Alpha
, and a
fragment of Power
. We present this framework 
via \emph{litmus tests}, as shown in \myfig\ref{fig:sb}. 

\begin{figure}[!t]
\begin{tabular}{m{.5\linewidth}m{.5\linewidth}}
\centerline{\AOE[.6\linewidth]}
&
  \scalebox{0.9}{
  \begin{tikzpicture}[>=stealth,thin,inner sep=0pt,text centered,shape=rectangle]
  \useasboundingbox (-1cm,0.5cm) rectangle (2cm,-2.6cm);
  
    \begin{scope}[minimum height=0.5cm,minimum width=0.5cm,text width=1.0cm]
      \node (a)  at (0, 0)  {\lb aWx{1}};
      \node (b)  at (0, -2)  {\lb bRy{0}};
      \node (c)  at (2, 0)  {\lb cWy{1}};
      \node (d)  at (2, -2)  {\lb dRx{0}};
    \end{scope}
		
    \path[->] (a) edge [out=225,in=135] node [left=0.1cm] {po} (b);
    \path[->] (b) edge [out=15,in=195] node [pos=0.8, below right=0.07cm] {fr} (c);
    \path[->] (c) edge [out=-45,in=45] node [left=0.1cm] {po} (d);
    \path[->] (d) edge [out=165,in=-15] node [pos=0.8, below left=0.07cm] {fr} (a);
  \end{tikzpicture}}
\end{tabular}
\vspace*{-8mm}
\caption{\label{fig:sb} Store Buffering (\ltest{sb})}
\vspace*{-4mm}
\end{figure}
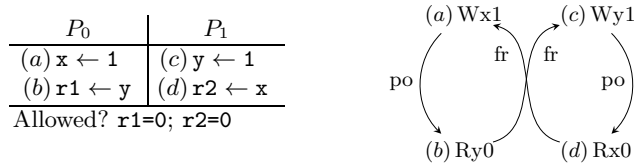

The keyword \emph{allowed} asks if the architecture permits the outcome ``{\tt
r1=1;r2=0;} {\tt r3=1;r4=0}''.  This relates to the event graphs of this program,
composed of relations over \emph{read and write memory events}.  A store
instruction (\eg \pstore{1}{x} on $P_0$) corresponds to a write event
(\llb{$(a)$}{W}{x}{$1$}), and a load (\eg \pload{r1}{y} on $P_0$) to a read
(\llb{$(b)$}{R}{y}{0}).  The validity of an execution boils down to the absence
of certain cycles in the event graph.  Indeed, an architecture allows an
execution when it represents a \emph{consensus} amongst the processors.  A
cycle in an event graph is a potential violation of this consensus.

If a graph has a cycle, we check if the architecture \emph{relaxes}
some relations. The consensus ignores relaxed relations,
hence becomes acyclic, \ie the architecture allows the final state. In
\myfig\ref{fig:sb}, on SC where nothing is relaxed, the cycle forbids
the execution. x86 relaxes the program order ($\po$ in \myfig\ref{fig:sb})
between writes and reads, thus a forbidding cycle no longer exists for 
$(a,b)$ and $(c,d)$ are relaxed.

\paragraph{Executions}
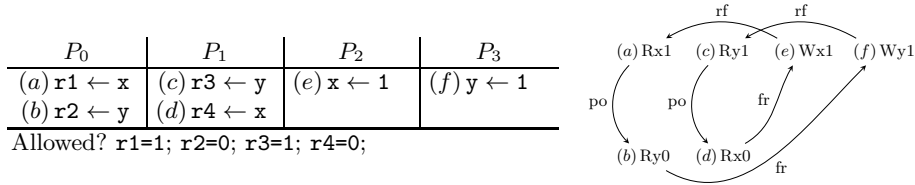
\begin{figure}[!t]
\begin{tabular}{m{.6\linewidth}@{}m{.4\linewidth}}
\centerline{\Iriw[\linewidth]}
&
  \scalebox{0.7}{
  \begin{tikzpicture}[>=stealth,thin,inner sep=0pt,text centered,shape=rectangle]
  \useasboundingbox (-1.5cm,1cm) rectangle (3.5cm,-3.2cm);
  
    \begin{scope}[minimum height=0.5cm,minimum width=0.5cm,text width=1.0cm]
      \node (a)  at (0, 0)  {\lb aRx{1}};
      \node (b)  at (0, -2)  {\lb bRy{0}};
      \node (c)  at (1.5, 0)  {\lb cRy{1}};
      \node (d)  at (1.5, -2)  {\lb dRx{0}};
      \node (e)  at (3.0, 0)  {\lb eWx{1}};
      \node (f)  at (4.5, 0)  {\lb fWy{1}};
    \end{scope}
		
    \path[->] (a) edge [out=225,in=135] node [left=0.1cm] {po} (b);
    \path[->] (c) edge [out=225,in=135] node [left=0.1cm] {po} (d);
    \path[->] (e) edge [out=150,in=30] node [above=0.1cm] {rf} (a);
    \path[->] (b) edge [out=-30,in=225,looseness=1.0] node [below right=0.07cm] {fr} (f);
    \path[->] (f) edge [out=150,in=30] node [above=0.1cm] {rf} (c);
    \path[->] (d) edge [out=30,in=-120] node [above left=0.07cm] {fr} (e);
  \end{tikzpicture}}
\end{tabular}
\vspace*{-8mm}
\caption{\label{fig:iriw} Independent Reads of Independent Writes (\ltest{iriw})}
\vspace*{-4mm}
\end{figure}
Formally, an \emph{event} is a read or a write memory access, composed of a
unique identifier, a direction R for read or W for write, a memory address,
and a value. We represent each instruction by the events it issues.
In~\myfig\ref{fig:iriw}, we associate the store \pstore{1}{$x$} on processor
$P_2$ to the event~\llb{$(e)$}{W}{$x$}{$1$}.   

We associate the program with an \emph{event structure} $\dfnshort{E}{(\evts,\po)}$,
composed of its events $\evts$ and the \emph{program order} $\po$, a
per-processor total order. We write $\dep$ for the relation (included in $\po$,
the source being a read) modelling \emph{dependencies} between instructions,
$\eg$ an \emph{address dependency} occurs when computing the address of a load
or store from the value of a preceding load. 
 
Then, we represent the \emph{communication} between processors leading to the
final state via an \emph{execution witness} $\dfnshort{X}{(\ws,\rf)}$, which
consists of two relations over the events.  First, the \emph{write
serialisation} $\ws$ is a per-address total order on writes which models the
\emph{memory coherence} widely assumed by modern
architectures
. It links a write $w$ to any write $w'$ to the same address that hits the
memory after $w$.  Second, the \emph{read-from} relation $\rf$ links a
write~$w$ to a read~$\re$ such that $\re$ reads the value written by~$w$. 
 
We include the writes in the consensus via the write serialisation.
Unfortunately, the read-from map does not give us enough information to embed
the reads as well. To that aim, we derive the \emph{from-read} relation $\fr$
from $\ws$ and $\rf$.  A read $r$ is in $\fr$ with a write $w$ when the write
$w'$ from which $r$ reads hit the memory before $w$ did.  Formally, we have:
$\dfn{(r, w) \in \fr}{\exists w', (w',r) \in \rf \wedge (w',w) \in \ws}$.
 
In \myfig\ref{fig:iriw}, the outcome corresponds to the execution on
the right if each memory location and register initially holds $0$. If {\tt
r1=1} in the end, the read $(a)$ read its value from the write $(e)$ on $P_2$,
hence $(e,a) \in \rf$. If {\tt r2=0}, the read $(b)$ read its value
from the initial state, thus before the write $(f)$ on $P_3$, hence $(b,f) \in
\fr$.  Similarly, we have $(f,c) \in \rf$ from {\tt r3=1}, and $(d,e) \in \fr$
from {\tt r4=0}. 

\paragraph{Relaxed or safe}

A processor can commit a write $w$ first to a store buffer, then to a cache,
and finally to memory. When a write hits the memory, all the processors agree
on its value. But when the write $w$ transits in store buffers and caches, a
processor can read its value through a read $r$ before the value is actually
available to all processors from the memory. In this case, the read-from
relation between the write $w$ and the read $r$ does not contribute to the
consensus, since the reading occurs in advance.  

We model this by some subrelation of the read-from $\rf$ being \emph{relaxed},
\ie not included in the consensus. When a processor can read from its own
store buffer~\cite{ag96} (the typical TSO/x86 scenario), we relax the internal
read-from $\rfi$. When two processors $P_0$ and $P_1$ can communicate privately
via a cache (a case of \emph{write atomicity} relaxation~\cite{ag96}),
we relax the external read-from $\rfe$, and call the corresponding write
\emph{non-atomic}. This is the main particularity of Power or ARM, and cannot
happen on TSO/x86.

Some program-order pairs are relaxed (\eg write-read pairs on x86)%
, \ie only a subset of $\po$ is guaranteed to occur in this order. 
%

When a relation is not relaxed, we call it \emph{safe}.
Architectures provide special \emph{fence} (or \emph{barrier}) instructions,
to prevent weak behaviours.  
Following~\cite{ams12}, the relation
${\fenced} \subseteq {\po}$ induced by a fence is \emph{non-cumulative} when it
orders certain pairs of events surrounding the fence, \ie $\fenced$ is safe.
The relation $\fenced$ is \emph{cumulative} when it makes writes atomic, \eg
by flushing caches. The relation $\fenced$ is \emph{A-cumulative}
(resp.~\emph{B-cumulative}) if $\rfe;\fenced$ (resp.~$\fenced;\rfe$) is
safe. 
When stores are atomic (\ie $\rfe$ is safe), \eg on 
TSO, we do not need cumulativity.

\paragraph{Architectures}
An \emph{architecture} $A$ determines the set $\safes_A$ of the \Cands~safe on
$A$, \ie the relations embedded in the consensus.  Following \cite{ams12}, we
consider the write serialisation $\ws$ and the from-read relation $\fr$ to be
always safe. 
SC relaxes nothing, \ie $\rf$ and $\po$ are safe.  TSO authorises the
reordering of write-read pairs and store buffering
(\ie $\poi{\textsf{WR}}$ and $\rfi$ are relaxed) but nothing else. 
We denote the safe subset of read-from, \ie the read-from relation globally
agreed on by all processors, by $\grf$.

Finally, an execution $(E,X)$ is \emph{valid} on $A$ when the three following
conditions hold.
1.~SC holds per address, \ie the communication 
and the program order for accesses with same address $\pio$ are compatible:
$\uniproc(E,X) \triangleq {\acyclic(\ws \cup \rf \cup \fr \cup \pio)}$. 
2.~Values do not come out of thin air, \ie there is no causal loop:
$\thin(E,X) \triangleq {\acyclic(\rf \cup \dep)}$.
3.~There is a consensus, \ie
the safe \Cands~do not form a cycle: $\consensus(E,X) \triangleq
{\acyclic({({\ws \cup \rf \cup \fr} \cup {\po})} \cap {\safes_A})}$.  
Formally: $\VX_{A}(E,X) \triangleq \uniproc(E,X) \wedge \thin(E,X) \wedge \consensus(E,X)$.

From the validity of executions we deduce a comparison of architectures: We say
that an architecture $A_2$ is \emph{stronger} than another one $A_1$ when the
executions valid on $A_2$ are valid on $A_1$. Equivalently we would say that
$A_1$ is \emph{weaker} than $A_2$. Thus, SC is stronger than any other
architecture discussed above.

\section{Symbolic event structures\label{sec:c-to-aes}}


For an architecture~$A$ and \emph{one} execution witness~$X$, the framework of
\mysec\ref{sec:model} determines if $X$ is valid on $A$. To prove reachability
of a program state, we need to reason about all its executions.  To do so
efficiently, we use symbolic representations capturing all possible executions
in a single constraint system.  We then apply SAT or SMT solvers to decide if a
valid execution exists for $A$, and, if so, get a satisfying assignment
corresponding to an execution witness.  

As said in~\mysec\ref{sec:introduction}, we build two conjuncts. The first one,
$\ssa$, represents the data and control flow per thread. The second, \parord,
captures the communications between threads (cf.~\mysec\ref{sec:aes-to-ces}).
We include a reachability property in~\ssa; the program has a valid execution
violating the property iff~$\ssa \wedge \parord$ is satisfiable.

We mostly use \emph{static single assignment form} (SSA)  of the input program to build \ssa
(\cf~\cite{cky03} for details).
In this SSA variant, each equation is augmented with a \emph{guard}: the guard
is the disjunction over all conjunctions of branching guards on paths to the
assignment.
To deal with concurrency, we use a fresh index for each occurrence of a given
shared memory variable, resulting in a fresh symbol in the formula.  
CheckFence~\cite{bam07} and~\cite{sw10,sw11} use a similarly modified encoding.

Together with \ssa, we build a \emph{symbolic event structure} (\aes).  As
detailed below, it captures basic program information needed to build the
second conjunct \parord in \mysec\ref{sec:aes-to-ces}.
\myfig\ref{ssa:iriw} illustrates this section: the formula \ssa on top
corresponds to the \aes beneath.

\begin{figure}
  \centering
\begin{align*}
&\text{\lstinline{main}} &  & P_0 & & P_1 & & P_2 & & P_3\\[-.5em]
& x_0 = 0 \\[-0.5em]
\wedge\; & y_0 = 0 \\[-1.3em]
         &   &  \wedge\; &r1_0^1 = x_1
                    & \wedge\;  &r3_0^2 = y_2
                         & \wedge\;  &x_3 = 1
                              & \wedge\;  & y_3 = 1 \\[-0.4em]
         &   &  \wedge\; &r2_0^1 = y_1
                    & \wedge\;  &r4_0^2 = x_2\\[-0.5em]
\wedge\; & \nprop
\end{align*}
\vspace*{-6mm}

\parbox{.96\linewidth}{\ \hrule height .000008mm \smallskip}
\tabcolsep0pt
\begin{tabular}{ll@{\extracolsep{1.2em}}l@{\extracolsep{0.2em}}l@{\extracolsep{1.2em}}l@{\extracolsep{0.2em}}l@{\extracolsep{1.2em}}l@{\extracolsep{0.2em}}l@{\extracolsep{1.2em}}l@{\extracolsep{0.2em}}l}
\llb{$(i_0)$}{W}{x}{$x_0$}  &   \\
\llb{$(i_1)$}{W}{y}{$y_0$}  &   \\[-.5em]
                          & &  \llb{$(a)$}{R}{x}{$x_{ 1}$}  &  
                               & \llb{$(c)$}{R}{y}{$y_{ 2}$}  & 
                                  & \llb{$(e)$}{W}{x}{$x_{ 3}$}  & 
                                       & \llb{$(f)$}{W}{y}{$y_{ 3}$}  &  \\
                          & &  \llb{$(b)$}{R}{y}{$y_{ 1}$}  &  
                               &  \llb{$(d)$}{R}{x}{$x_{ 2}$}  &  
\end{tabular}
\vspace*{-1mm}
\caption{\small{The formula~\ssa for~\ltest{iriw} (\myfig\ref{fig:iriw})
with~$\nprop=({r1}_{0}^{1}=1 \wedge {r2}_{0}^{1}=0 \wedge {r3}_{0}^{2}=1
\wedge {r4}_{0}^{2}=0)$, and its~\aes\label{ssa:iriw}} (guards
omitted since all true)}
\vspace*{3pt}
\end{figure}

\paragraph{Static single assignment form (SSA)}
To encode \ssa we use a variant of SSA~\cite{cfrwz91} and loop unrolling. 
The details of this encoding are in~\cite{cky03}, except for differences in the
handling of shared memory variables, as explained below.

In SSA, each occurrence of a program variable is annotated with an index.
We turn assignments in SSA form into equalities, with distinct indexes yielding
distinct symbols in the resulting equation.  For example, the assignment {\tt
x:=x+1} results in the equality $x_1 = x_0+1$.  We use unique indexes for
assignments in loops via loop unrolling: repeating {\tt x:=x+1} twice yields
$x_1 = x_0+1$ and $x_2 = x_1+1$.  Control flow join points yield additional
equations involving the guards of branches merging at this point (see
\cite{cky03} for details).

In concurrent programs, we also need to consider join points due to
communication between threads, \ie, \emph{concurrent SSA form}
(CSSA)~\cite{lmp97}.  
To deal with weaker models, we use a fresh index for each occurrence of a given
shared memory variable, resulting in a fresh symbol in the formula.  Thus, each
occurrence may take non-deterministic values, \ie this approach
over-approximates the behaviours of a program. If~{\tt x} is shared in the
above example, the modified SSA encoding of the second loop unrolling becomes
$x_3 = x_2+1$, breaking any causality between the first loop iteration (encoded
as $x_1 = x_0+1$) and the second one.  Sinha and Wang~\cite{sw10,sw11} use the
same approach, but since they consider SC only, their use of fresh indexes may
produce more symbols than necessary.

By adding the negation of the reachability property to be checked to our
(over-approximating) SSA equations, we obtain a formula~\ssa that is
satisfiable if there exists a (concurrent) counterexample violating the
property.  As this is an over-approximation, the converse need not be true,
\ie, a satisfying assignment of \ssa may constitute a spurious counterexample.
\mysec\ref{sec:aes-to-ces} restores precision using the~$\parord$ constraints
derived from the~\aes.

In \ssa, memory addresses map to unique symbols via the (symbolic) pointer
dereferencing of~\cite[\mysec{4}]{cky03}. In the weak memory case, we ensure
this by using analyses sound for this setting~\cite{akl11}.

The top of \myfig\ref{ssa:iriw} gives \ssa for \myfig\ref{fig:iriw}. We print a
column per thread, vertically following the control flow, but it forms a single
conjunction. Each occurrence of a program variable carries its SSA index as a
subscript. Each occurrence of the shared memory variables \lstinline{x} and
\lstinline{y} has a unique SSA index.  Here we omit the guards, as this program
does not use branching or loops.

\paragraph{From SSA to symbolic event structures}
A symbolic event structure (\aes)~$\dfnshort{\gamma}{(\mathbb{S}, \po)}$ is a
set~$\mathbb{S}$ of \emph{symbolic events} and a \emph{symbolic program
order}~$\po$.  A symbolic event holds a \emph{symbolic value} instead of a
concrete one as in \mysec\ref{sec:model}. We define $\guard(e)$ to be the
Boolean guard of a symbolic event~$e$, which corresponds to the guard of the
SSA equation as introduced above.
We use these guards to build the executions of \mysec\ref{sec:model}: 
a guard evaluates to true if the branch is taken, false otherwise.
The symbolic program order $\po(\gamma)$ gives a list of symbolic events per
thread of the program. The order of two events in~$\po(\gamma)$ gives the
program order in a concrete execution if both guards are true.

Note that $\po(\gamma)$ is an implementation-dependent
linearisation of the branching structure of a thread, induced by the path
merging applied while constructing the SSA form.  For instance, {\tt if $e_1$
then $e_2$ else $e_3$} could be linearised as either $(e_1, e_2, e_3)$ or $(e_1,
e_3, e_2)$ as any two events of a concrete execution ($e_1$ and $e_2$, or $e_1$
and $e_3$) remain in program order.
The original branching structure, \ie, the unlinearised symbolic program order induced
by the control flow graph, is maintained in the relation $\pobr(\gamma)$. For the above
example, $\pobr(\gamma)$ contains $(e_1, e_2)$ and $(e_1, e_3)$.
 
We build the \aes~$\gamma$ alongside the SSA form, as follows. Each occurrence
of a shared program variable on the right-hand side of an assignment becomes a
\emph{symbolic read}, with the SSA-indexed variable as symbolic value, and the
guard is taken from the SSA equation.
Similarly, each occurrence of a shared program variable on the left-hand side
becomes a \emph{symbolic write}.  Fences do not affect memory states in a
sequential setting, hence do not appear in SSA equations. We simply add a fence
event to the \aes when we see a fence.  We take the order of assignments per
thread as program order, and mark thread spawn points.

At the bottom of \myfig\ref{ssa:iriw}, we give the \aes of \ltest{iriw}. Each
column represents the symbolic program order, per thread. We use the same
notation as for the events of \mysec\ref{sec:model}, but values are SSA symbols.
Guards are omitted again, as 
they all are trivially true. We depict the thread spawn events by starting the
program order in the appropriate row. Note that we choose to put the two
initialisation writes in program order on the main thread.

\paragraph{From symbolic to concrete event structures\label{sec:concretisation}}

To relate to the models of \mysec\ref{sec:model}, we \emph{concretise} symbolic
events. A satisfying assignment to $\ssa \wedge \parord$, as computed by a SAT
or SMT solver, induces, for each symbolic event, a concrete value (if it is a
read or a write) and a valuation of its guard (for both accesses and fences).
A valuation $\Vals$ of the symbols of \ssa includes the values of each symbolic
event. Since guards are formulas that are part of \ssa, $\Vals$ allows us to
evaluate the guards as well. For a valuation $\Vals$, we write
$\conc(e_s,\Vals)$ for the concrete event corresponding to $e_s$, if there is
one, \ie, if $\guard(e_s)$ evaluates to true under~$\Vals$. 

The concretisation of a set $\mathbb{S}$ of symbolic events is a set
$\mathbb{E}$ of concrete events, as in \mysec\ref{sec:model}, s.t.~for
each $e \in \mathbb{E}$ there is a symbolic version $e_s$ in $\mathbb{S}$.
We write $\conc(\mathbb{S},\Vals)$ for this concrete set $\mathbb{E}$.
The concretisation $\conc(\textsf{r}_s,\Vals)$ of a symbolic relation
$\textsf{r}_s$ is the relation $\{(x,y) \mid \exists (x_s,y_s) \in \textsf{r}_s.
  x = \conc(x_s,\Vals) \wedge y = \conc(y_s,\Vals) \}$.

Given an \aes $\gamma$, $\conc(\gamma,\Vals)$ is the event structure 
(\cf \mysec\ref{sec:model}), whose set of events is the concretisation of the
events of~$\gamma$ \wrt~$\Vals$, and whose program order is the
concretisation of $\po(\gamma)$ \wrt~$\Vals$. 
For example, the graph of \myfig\ref{fig:iriw} (erasing the $\rf$ and $\fr$
relations) is a concretisation of the \aes of \ltest{iriw}
(\cf \myfig\ref{ssa:iriw}).

\section{Encoding the communication and weak memory relations symbolically\label{sec:aes-to-ces}}
For an architecture $A$ and an \aes $\gamma$, we need to represent the
communications (\ie, $\rf, \ws$ and $\fr$) and the weak memory relations (\ie,
$\ppo_{A}, \grf_{A}$ and $\ab_{A}$) of \mysec\ref{sec:model}.  We encode them
as a formula~$\parord$, s.t.~$\ssa \wedge \parord$ is satisfiable iff there is
an execution valid on~$A$ violating the property encoded in~$\ssa$.
We avoid transitive closures to obtain a small number of constraints.
We start with an informal overview of our approach, then describe how we encode
partial orders, and finally detail the encoding for each relation of
\mysec\ref{sec:model}.

\paragraph{Overview}
We present our approach on \ltest{iriw} (\myfig\ref{fig:iriw}) and its \aes
$\gamma$ (\myfig\ref{ssa:iriw}).  In \myfig\ref{fig:iriw}, we represent only
one possible execution, namely the one corresponding to the (non-SC) final
state of the test at the top of the figure.  In this section, we generate
constraints representing all the executions of \ltest{iriw} on a given
architecture. We give these constraints, for the address $x$ in
\myfig\ref{fig:constraints-iriw} in the SC case (for brevity we skip $y$,
analogous to $x$). Weakening the architecture removes some constraints: for
Power, we omit the (rf-grf) and (ppo) constraints.  For TSO, all constraints
are the same as for SC.

In \myfig\ref{fig:constraints-iriw}, each symbol $c_{a b}$ is a \emph{clock
constraint}, representing an ordering between the events $a$ and $b$. A
variable $s_{wr}$ represents a read-from between the write $w$ and the read
$r$. 

The constraints of \myfig\ref{fig:constraints-iriw} represent the preserved
program order (\cf \mysec\ref{sec:ppo}), \eg, on SC or TSO the read-read pairs
$(a,b)$ on $P_0$ (ppo $P_0$) and $(c,d)$ on $P_1$ (ppo $P_1$), but nothing on
Power. We generate constraints for the read-from (\cf \mysec\ref{pord:rf}), for
example (rf-some $x$); the first conjunct $s_{i_0 a} \vee s_{ea}$ concerns the
read $a$ on $P_0$. This means that~$a$ can read either from the initial write
$i_0$ or from the write $e$ on $P_2$.  The selected read-from pair also implies
equalities of the values written and read (rf-val~$x$): for instance,~$s_{i_0
a}$ implies that~$x_1$ equals the initialisation~$x_0$.  The
architecture-independent constraints for write serialisation (\cf
\mysec\ref{sec:ws}) and from-read (\cf \mysec\ref{sec:fr}) are specified as
(ws~$x$) and (fr~$x$); (ws~$y$) and (fr~$y$) are analogous. As there are no
fences in \ltest{iriw}, we do not generate any memory fence constraints (\cf
\mysec\ref{sec:ab}).

We represent the execution of \myfig\ref{fig:iriw} as follows.
For $(e,a)$ and $(i_0,d) \in \grf$, we have the constraint $s_{ea} \implies c_{ea}$
and $s_{i_0 d} \implies c_{i_0 d}$ in (rf-grf $x$). This means that $a$ reads
from $e$ (as witnessed by $s_{ea}$), and that we record that $e$ is ordered
before $a$ in $\grf$ (as witnessed by $c_{ea}$); \emph{idem} for $d$ and $i_0$.
To represent
$(d,e) \in \fr$, we pick the appropriate constraint in (fr $x$), namely $(s_{i_0
d} \wedge c_{i_0 e}) \implies c_{de}$. This reads ``if $d$ reads from $i_0$ and
$i_0$ is ordered before $e$ (in $\ws$, because $i_0$ and $e$ are two writes to
$x$), then $d$ is ordered before $e$ (in $\fr$).''  

Together with (ppo $P_0$) and (ppo $P_1$), these constraints represent the
execution in \myfig\ref{fig:iriw}. We cannot find a satisfying assignment of
these constraints, as this leads to both $a$ before $b$ (by (ppo $P_0$)) and
$b$ before $a$ (by (fr $y$), (rf-grf $y$), (ppo $P_1$), (fr $x$) and (grf
$x$)). On Power, however, we neither have the ppo nor the grf constraints, hence we
can find a satisfying assignment.

\begin{figure}[!t]
\begin{align}
  \tag{rf-val $x$} &(s_{i_0 a} \implies x_1 = x_0) \wedge (s_{i_0 d} \implies x_2 = x_0) \wedge\\[-0.5em]
  \notag &(s_{e a} \implies x_1 = x_3) \wedge (s_{e d} \implies x_2 = x_3) \\[-0.3em]
  \tag{rf-grf $x$} &(s_{i_0 a} \implies c_{i_0 a}) \wedge (s_{e a} \implies c_{e a}) \wedge\\[-0.5em]
  \notag &(s_{i_0 d} \implies c_{i_0 d}) \wedge (s_{e d} \implies c_{e d})\\[-0.3em]
  \tag{rf-some $x$} &(s_{i_0 a} \vee s_{e a}) \wedge (s_{i_0 d} \vee s_{e d})\\[-0.3em]
%
  \tag{ws $x$} &\neg c_{i_0 e} \Rightarrow c_{e i_0}\\
%
%
  \tag{fr $x$} &((s_{i_0 a} \wedge c_{i_0 e}) \implies c_{a e}) \wedge ((s_{i_0 d} \wedge c_{i_0 e}) \implies c_{d e}) \wedge\\[-0.5em]
  \notag &((s_{e a} \wedge c_{e i_0}) \implies c_{a i_0}) \wedge ((s_{e d} \wedge c_{e i_0}) \implies c_{d i_0})\\[-0.3em]
%
  \tag{ppo main} &c_{i_0 i_1}
    \qquad \text{(ppo $P_0$)\ \ } c_{a b}
    \qquad \text{(ppo $P_1$)\ \ } c_{c d}
\end{align}
\vspace{-5mm}
\caption{Partial order constraints for address $x$ in \myfig\ref{fig:iriw} on
SC} \label{fig:constraints-iriw} \end{figure}

\paragraph{Symbolic partial orders\label{partord}}
We associate each symbolic event $x$ of an \aes $\gamma$ with a unique
\emph{clock} variable $\clock{x}$ (\cf~\cite{lam78,sw10}) ranging over the
naturals.
For two events $x$ and $y$, we define the Boolean \emph{clock constraint} as
$\dfnshort{c_{xy}}{(\guard(x) \wedge \guard(y)) \implies \clock{x} <
\clock{y}}$ (``$<$'' being less-than over the integers).
We encode a relation $\textsf{r}$ over the symbolic events of $\gamma$ as the
formula $\phi(\textsf{r})$ defined as the conjunction of the clock
constraints $c_{xy}$ for all $(x,y) \in \textsf{r}$, \ie,
$\dfnshort{\phi(\textsf{r})}{\bigwedge_{(x,y) \in \textsf{r}} c_{xy}}$.

Let~$\Clocks$ be a valuation of the clocks of the events of~$\gamma$.
Let~$\Vals$ be a valuation of the symbols of the formula $\ssa$ associated to
$\gamma$. As noted in \mysec\ref{sec:concretisation}, $\Vals$ gives us concrete
values for the events of $\gamma$, and allows us to evaluate their guards. We
show below that $(\Clocks,\Vals)$ satisfies $\phi(\textsf{r})$ iff the
concretisation of $\textsf{r}$ \wrt $\Vals$ is acyclic, provided that this
relation has \emph{finite prefixes}.

A prefix of $x$ in a relation $\textsf{r}$ is a (possibly infinite) list
$S=[x_0, x_1, x_2, \dots]$ s.t.~$x=x_0$ and for all $i$, $(x_{i+1},x_i) \in
\textsf{r}$ (observe that the prefix is reversed \wrt the order imposed by the
relation).  The relation $\textsf{r}$ has finite prefixes if for each $x$,
there is a bound $l \in \mathbb{N}$ to the cardinality of the prefixes of
$x$ in $\textsf{r}$. We write $\card(S)$ for the cardinality of a list
$S=[x_0,x_1,x_2, \dots]$, \ie, $\dfnshort{\card(S)}{\card(\{x \mid \exists i.
x=x_i\})}$.  We write $\prefix(\textsf{r},x)$ for the set of prefixes of $x$ in
$\textsf{r}$.  Formally, $\textsf{r}$ has finite prefixes when $\forall x.
\exists l.  \forall S \in \prefix(\textsf{r},x). \card(S) < l$.  In our proofs
and in \myalg\ref{alg:ppo-new} we denote the concatenation of two lists~$S_1$
and~$S_2$ by $S_1\mo{++}S_2$.

In the following, we allow symbolic relations with infinite prefixes provided
their concretisations have finite prefixes. Thus we do not consider executions
with an infinite past, or running for more steps than the cardinality
of~$\mathbb{N}$. Our first lemma justifies why checking the acyclicity of a
concrete relation amounts to checking the satisfiability of the formula
encoding this relation symbolically: \begin{lemma} \label{sat:ac}
{$(\Clocks,\Vals)$ satisfies $\phi(\textsf{r})$ iff
$\conc(\textsf{r},\Vals)$ is acyclic and has finite prefixes.}
\end{lemma}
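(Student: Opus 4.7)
The plan is to prove the equivalence by relating clock values on symbolic events to cardinalities of prefixes in the concrete relation, using that $\mathbb{N}$ has no infinite descending chains.

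For the $(\Rightarrow)$ direction, I would assume $(\Clocks,\Vals) \models \phi(\textsf{r})$ and note that by definition of $c_{xy}$, every pair $(x_s,y_s) \in \textsf{r}$ whose guards both evaluate to true under $\Vals$ satisfies $\clock{x_s} < \clock{y_s}$. Since $\conc(\textsf{r},\Vals)$ consists exactly of the concretisations of such symbolic pairs, any $(x,y) \in \conc(\textsf{r},\Vals)$ inherits a strict clock inequality on the clocks of its underlying symbolic events. Acyclicity is then immediate, as a cycle would give $\clock{x_s} < \clock{x_s}$. For the finite-prefix property, along any prefix $[x_0,x_1,\dots]$ of some $x_0 \in \conc(\textsf{r},\Vals)$ the clocks strictly decrease; since they take values in $\mathbb{N}$, the cardinality is bounded by $\clock{x_0^s}+1$, where $x_0^s$ is any symbolic preimage of $x_0$.

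For the $(\Leftarrow)$ direction, I would construct $\Clocks$ explicitly. For a symbolic event $x_s$ with $\guard(x_s)$ true under $\Vals$, set
\[
  \clock{x_s} \;:=\; \max\{\card(S) \mid S \in \prefix(\conc(\textsf{r},\Vals),\conc(x_s,\Vals))\}.
\]
This maximum is a well-defined finite natural because the finite-prefix hypothesis gives a uniform bound $l$ on all prefix cardinalities of any fixed element. For events with false guard, set $\clock{x_s} := 0$; the constraint $c_{x_s y_s}$ is then vacuously satisfied whenever either guard fails. To verify $c_{x_s y_s}$ when both guards hold for $(x_s,y_s) \in \textsf{r}$, let $(x,y):=(\conc(x_s,\Vals),\conc(y_s,\Vals))$, so $(x,y) \in \conc(\textsf{r},\Vals)$; any prefix $S=[x,x_1,\dots]$ of $x$ can be prepended to give $[y]\mo{++}S=[y,x,x_1,\dots]$, which is a prefix of $y$ in $\conc(\textsf{r},\Vals)$. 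Acyclicity ensures that $y$ does not occur in $S$ (otherwise a cycle $x \to y \to \dots \to x$ would arise), so $\card([y]\mo{++}S) = \card(S)+1$; taking the max over $S$ yields $\clock{y_s} \ge \clock{x_s}+1 > \clock{x_s}$.

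The principal obstacle is the backward direction, specifically ensuring that the clock assignment is both well-defined and strictly monotone along $\textsf{r}$: I must invoke the finite-prefix hypothesis to bound the supremum (so the max is attained in $\mathbb{N}$), and acyclicity to guarantee that prepending $y$ genuinely increases the cardinality, since $\card(\cdot)$ counts distinct elements and a repeated occurrence of $y$ in $S$ would not strictly grow the count. The forward direction is essentially automatic once one observes that clocks live in a well-founded set.
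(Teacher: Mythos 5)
Your proposal is correct and follows essentially the same route as the paper's proof: the forward direction chains the strict clock inequalities to refute cycles and uses well-foundedness of $\mathbb{N}$ to bound prefix cardinalities, and the backward direction defines $\clock{x}$ as the maximal prefix cardinality and uses acyclicity to show that prepending $y$ strictly increases the count. The only (cosmetic) difference is that you make explicit the role of the finite-prefix hypothesis in ensuring the maximum is attained, which the paper leaves implicit.
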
 \begin{proof} \noindent\em{$\implies$: We let
$\textsf{r}_c=\conc(\textsf{r},\Vals)$. One can show by induction that
$(*)$ if $(\Clocks,\Vals)$ satisfies $\phi(\textsf{r})$ then for all
$(x,y) \in \transc{\textsf{r}_c}$, $c_{xy}$ is true.
Now, suppose $\phi(\textsf{r})$ satisfied, and as a contradiction,
$\textsf{r}_c$ cyclic, \ie, $\exists x. (x,x) \in \transc{\textsf{r}_c}$. Thus
$c_{xx}$ is true by $(*)$; this contradicts the irreflexivity of $<$ over the
integers.

Now we show that $\textsf{r}_c$ has finite prefixes, \ie, for each $x$ we give
a bound $l$ over all $S \in \prefix(\textsf{r}_c,x)$. As a contradiction take
$S=[x_0,\dots x_n] \in \prefix(\textsf{r}_c,x)$ s.t.~$x=x_0$ and $\card(S) >
\clock{x}$. Thus for all $i$, we have $(x_{i+1}, x_i) \in \textsf{r}_c$ and
$\clock{x_{i+1}} < \clock{x_i}$ by $(*)$.  Since $n \geq \card(S)$, $\card(S) >
\clock{x}$ and $\clock{x_0}=\clock{x}$, we have $\clock{x_n} < 0$, which
contradicts the fact that our clocks are naturals.  Thus for each $x$ we can
take $l=\clock{x}$.

$\cimplies$:
Let $\textsf{r}_c=\conc(\textsf{r},\Vals)$. For all $e$ 
s.t.~$\guard(e)=\text{false}$, take $\clock{e}=0$. Thus $c_{xy}$ is true if
$\guard(x)$ or $\guard(y)$ is false. Now, have $(x,y) \in \textsf{r}$ with both
guards true, \ie, $(x,y) \in \textsf{r}_c$. Take $\clock{x}$ to be the maximal
cardinality of the $S$ in $\prefix(\textsf{r}_c,x)$, \emph{idem} for
$y$.  We want to prove $\clock{x} < \clock{y}$. Take $S$
s.t.~$\clock{x}=\card(S)$.  From $(x, y) \in \textsf{r}_c$, we have
$[y]\mo{++}S \in \prefix(\textsf{r}_c,y)$. Now, $\card([y]\mo{++}S)
\leq \clock{y}$ by maximality of $\clock{y}$. It suffices to prove $\card(S) <
\card([y]\mo{++}S)$.  Suppose $\card(S) \geq \card([y]\mo{++}S)$.
Then $y$ appears in $S$. Thus $(y,x) \in \transc{\textsf{r}_c}$ since $S$ is a
prefix of $x$; as $(x,y) \in \textsf{r}_c$ by hypothesis, we have a cycle in
$\textsf{r}_c$. } \end{proof}
The formula~$\phi(\textsf{r}_1 \cup \textsf{r}_2)$
is equivalent to~$\phi(\textsf{r}_1) \wedge \phi(\textsf{r}_2)$.
Thus we encode unions of relations, \eg, $\ghb_A\triangleq \ws \cup \fr \cup
\grf_A \cup \ppo_A \cup \ab_A$, as the conjunction of their respective
encodings.  By~\mylem\ref{sat:ac}, the acyclicity of~$\ghb_A$ corresponds to
the satisfiability of~$\phi(\ghb_s)$, where $\ghb_s$ is a symbolic encoding of
$\ghb_A$. To form~$\phi(\ghb_s)$, we form the conjunction of the
formulas~$\phi(\textsf{r})$, for~$\textsf{r}$ being a symbolic encoding
of $\ws$, $\fr$, $\grf_A$, $\ppo_A$ and $\ab_A$.
 
We now present these encodings, in that order. \mysec\ref{sec:model} also
relies on the program order per location for the $\uniproc$ check, and the
dependencies for the $\thin$ check, omitted for brevity.  We compute them
alongside the preserved program order; they use independent sets of clock
variables, but the same clock constraints.

We define auxiliaries over symbolic events: $\tr(e)$ is the thread identifier
of~$e$, $\loc(e)$ the memory address read from or written to (\eg, $x$
for~\llb{$(e)$}{R}{x}{y}), and $\val(e)$ its (symbolic) value.  Each algorithm
outputs constraints, whose conjunction we add to~$\parord$.

For each algorithm we state and prove a lemma about its correctness.
These follow the scheme of \mylem\ref{sat:ac}, \ie we show the encoding correct
for any satisfying valuation of clocks and \ssa.
Thus we will introduce symbolic encodings of sets $\textsf{r}(\gamma)$, where
membership in the set is given by a formula and thus depends on the actual
valuation under $\Clocks$ and $\Vals$.

\subsection{Read-from\label{pord:rf}}

\begin{algorithm}[!t]
\SetKwInOut{Input}{input}
\SetKw{Output}{output:}
\DontPrintSemicolon

\Input{$\gamma$, $A$ \quad \Output{$C_{\wf}, C_{\rf}$, $C_{\grf}$}}
\BlankLine
$\reads := \{ (\alpha, \{ r_1 \ldots r_n \} ) \mid \text{$r_i$ is read} \wedge \loc(r_i) = \alpha \}$\; \label{alg:rf:line1}
$\writes:=\{ (\alpha, \{ w_1 \ldots w_n \} ) \mid \text{$w_i$ is write}\wedge \loc(w_i) = \alpha \}$\; \label{alg:rf:line2}
$C_{\rf} := \emptyset$\label{alg:rf:line3a}; $C_{\grf} := \emptyset$; $C_{\wf} := \emptyset$\;\label{alg:rf:line3b}
\ForEach{$\alpha \text{ s.t. } \exists R,W. (\alpha, R) \in \reads \wedge (\alpha, W) \in \writes$}{\label{alg:rf:line4}
  \ForEach{$r \in R$}{\label{alg:rf:line5}
    $\rfsome := \emptyset$\label{alg:rf:line6}\; 
    \ForEach{$w \in W$}{\label{alg:rf:line7a}
      \If{$(r,w) \not\in \po(\gamma)$}{ \label{alg:rf:line7}
      $\rfsome := \rfsome \cup \{ s_{w r} \}$\;\label{alg:rf:line8}
      {$C_{\wf} := C_{\wf} \cup \{s_{wr} \implies (\guard(w) \wedge \sval(r) = \sval(w))\}$\;\label{alg:rf:line9} }
      $C_{\rf} := C_{\rf} \cup \{s_{w r} \Rightarrow c_{w r}\}$\;\label{alg:rf:line9b}
      \If{$(w,r)$ not relaxed on $A$ and $\tr(w) \neq \tr(r)$}{\label{alg:rf:line10a}
        $C_{\grf} := C_{\grf} \cup \{s_{w r} \Rightarrow c_{w r}\}$\;\label{alg:rf:line10} 
      } 
    }
  }
    {$C_{\wf} := C_{\wf} \cup \{\guard(r) \implies \bigvee_{s \in \rfsome} s\}$\;\label{alg:rf:line11}}
  }
}
\BlankLine
\caption{Constraints for read-from}
\label{alg:rf}
\end{algorithm}
For an architecture $A$ and an \aes~$\gamma$, \myalg\ref{alg:rf} encodes the
read-from (resp.~safe read-from) as the set of constraints $C_{\rf}$
(resp.~$C_{\grf}$). Following \mysec\ref{sec:model}, we add constraints
to $C_{\grf}$ depending on: first, the relation being within one thread or
between distinct threads (derivable from $\tr(w)$ and $\tr(r)$); second,
whether~$A$ exhibits store buffering, store atomicity relaxation, or both.

\myalg\ref{alg:rf} groups the reads and writes by address, in the
sets $\reads$ and $\writes$ (lines~\ref{alg:rf:line1} and~\ref{alg:rf:line2}).
For \ltest{iriw}, $\reads = \{ (\text{\lstinline{x}}, \{ a, d \}),
(\text{\lstinline{y}}, \{ b, c \}) \}$ and $\writes = \{ (\text{\lstinline{x}},
\{ i_0, e \}), (\text{\lstinline{y}}, \{ i_1, f\}) \}$.

The next step forms the potential read-from pairs. To that end,
\myalg\ref{alg:rf} introduces a free Boolean variable~$s_{wr}$ for each pair
$(w,r)$ of write and read to the same address (line~\ref{alg:rf:line8}), unless
such a pair contradicts program order (line~\ref{alg:rf:line7}). Indeed, if
$(w,r)$ is in $\rf$ and $(r,w)$ is in $\po$, this violates the $\uniproc$ check
of \mysec\ref{sec:model}. 

The variable $\rfsome$, initialised in line~\ref{alg:rf:line6}, collects the
variables~$s_{w r}$ in line~\ref{alg:rf:line8}. 
For \ltest{iriw}, the memory address \lstinline{x}, and the read~$a$, we have
$\rfsome = \{ s_{i_0 a}, s_{e a} \}$, \ie, the read $a$ can read either from
$i_0$ (the initial write to \lstinline{x}), or from the write $e$ on $P_2$.

Following \mysec\ref{sec:model}, each read must read from some write. We ensure
this at line~\ref{alg:rf:line11}, by gathering in $C_{\wf}$, for a given $r$,
the union of all the potential read-from $s_{wr}$ collected in $\rfsome$.

Going back to \ltest{iriw}, recall from \mysec\ref{sec:c-to-aes} that an event
has a guard indicating the branch of the program it comes from. In
\ltest{iriw}, the guard of~$a$ is true (as all the others), \ie, the read $a$
is concretely executed. Hence there exists a write (either $i_0$ or $e$) from
which $a$ reads, as expressed by the constraint $s_{i_0 a} \vee s_{e a}$ formed
at line~\ref{alg:rf:line11}. 

If $s_{wr}$ evaluates to true (\ie, $r$ reads from $w$), we record the
\emph{value constraint} $\val(r) = \val(w)$ in the set $C_{\wf}$
(line~\ref{alg:rf:line9}).  For \ltest{iriw}, we obtain the following for
\lstinline{x}: $(s_{i_0 a} \implies x_1 = x_0) \wedge\; (s_{i_0 d} \implies x_2
= x_0) \wedge\; (s_{e a} \implies x_1 = x_3) \wedge\; (s_{e d} \implies x_2 =
x_3)$.
The constraint $s_{i_0 a} \implies x_1 = x_0$ reads ``if $s_{i_{0}a}$ is true
(\ie, $a$ reads from $i_0$) then the value $x_1$ read by $a$ equals the value
$x_0$ written by $i_0$.''

The constraint added to $C_{\rf}$ is such that only if $s_{wr}$ evaluates to
true, the clock constraint $c_{wr}$ is enforced (line~\ref{alg:rf:line9b}).
For \ltest{iriw} we add the following to
$C_{\rf}$, for the address \lstinline{x}: $(s_{i_0 a} \implies c_{i_0 a})
\wedge\; (s_{e a} \implies c_{e a}) \wedge (s_{i_0 d} \implies c_{i_0 d})
\wedge\; (s_{e d} \implies c_{e d})$.

If $(w,r)$ is not relaxed on~$A$, we also add its clock constraint $c_{wr}$ to
$C_{\grf}$ (line~\ref{alg:rf:line10}). In \ltest{iriw}, all reads read from an
external thread. Thus on an architecture that does not relax store atomicity
(\ie, stronger than Power), we add the constraints that we added to $C_{\rf}$ to
$C_{\grf}$ as well.
On Power, $C_{\grf}$ remains empty.
 
We now write $\grf_A$ for both the function over concrete relations given by
the definition of $A$ as in \mysec\ref{sec:model}, and the corresponding
function over symbolic relations.
Given an architecture~$A$, we have $\grf_{A}(\textsf{r}) = \{(w,r) \in
\textsf{r} \mid (w,r) \text{ is not relaxed on $A$}\}$.
For example if $A$ is TSO, all thread-local read-from pairs are relaxed:
$\grf_{A}(\textsf{r}) = \{{(w,r) \in \textsf{r} \mid \tr(w) \neq \tr(r)}\}$.
We write $(w,r) \in \WR_{\alpha}$ when $w$ writes to an address $\alpha$ and $r$ reads
from the same $\alpha$, and $\dfnshort{\prf(\gamma)}{\{(w,r) \in \bigcup_{\alpha}\WR_{\alpha} \mid
(r,w) \not\in \po(\gamma)\}}$.  We write $\rf(\gamma)$ for the set $\{(w,r) \in
\prf(\gamma) \mid s_{wr}\}$ (with $s_{wr}$ of \myalg\ref{alg:rf}), and $\grf(\gamma)$ for
$\grf_A(\rf(\gamma))$. Note that we build the external safe read-from ($\grfe(\gamma)$)
only, \ie, between two events from distinct threads. We compute the
internal one as part of $\ppo_A$, in \myalg\ref{alg:ppo-new}.

Given an \aes $\gamma$, \myalg\ref{alg:rf} outputs $C_{\rf}, C_{\grf}$ and
$C_{\wf}$. 
Let $\WRv$ be a valuation of the $s_{wr}$ variables of $\gamma$.  We write
$\insta(\rf(\gamma),\WRv)$ (resp.~$\insta(\grf(\gamma),\WRv)$) for $\rf(\gamma)$
(resp.~$\grfe(\gamma)$) where $\WRv$ instantiates the $s_{wr}$ variables (thus
$\rf(\gamma)$ is a symbolic encoding of the set as noted before this
sub-section; we
use this notation similarly in the remainder of this section). We show that
\myalg\ref{alg:rf} gives the clock constraints encoding $\grf$ 
(we omit the corresponding lemma for $\rf$): 
\begin{lemma}\label{rf:sat}
{$(\Clocks,\Vals,\WRv)$ satisfies~$\bigwedge_{c \in C_{\wf}
\cup C_{\grf}} c$ iff~$(\Clocks,\Vals)$ satisfies \\ i) for all~$r$
s.t.~$\guard(r)$ is true, there is $w$ s.t.~$(w,r) \in \insta(\rf(\gamma),\WRv)$
and \\ ii) for all~$(w,r) \in \insta(\rf(\gamma),\WRv)$, $\guard(w)$ is true
and~$\val(w) = \val(r)$ and \\ iii) $\bigwedge_{(w,r) \in
\insta(\grfe(\gamma),\WRv)} c_{wr}$.} \end{lemma}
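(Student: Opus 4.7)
The plan is to prove both directions by unpacking exactly which constraints \myalg\ref{alg:rf} adds to $C_{\wf}$ and $C_{\grf}$, and matching each against one of the three semantic clauses. Recall that the algorithm only introduces a variable $s_{wr}$ for pairs $(w,r)$ with $\loc(w)=\loc(r)$ and $(r,w)\notin\po(\gamma)$, i.e.\ for pairs in $\prf(\gamma)$. Consequently $\insta(\rf(\gamma),\WRv)$ ranges exactly over those pairs for which $\WRv(s_{wr})$ is true, and $\insta(\grfe(\gamma),\WRv)$ restricts this to the subset where $(w,r)$ is not relaxed on $A$ and $\tr(w)\neq\tr(r)$.

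For the forward direction, assume $(\Clocks,\Vals,\WRv)$ satisfies every constraint in $C_{\wf}\cup C_{\grf}$. Clause (i) follows from the constraints added at line~\ref{alg:rf:line11}: whenever $\guard(r)$ holds, the disjunction $\bigvee_{s\in\rfsome}s$ must hold, and $\rfsome$ collects exactly the $s_{wr}$ corresponding to potential read-from pairs, so some $(w,r)\in\insta(\rf(\gamma),\WRv)$. Clause (ii) follows from the constraints at line~\ref{alg:rf:line9}: if $(w,r)\in\insta(\rf(\gamma),\WRv)$ then $\WRv(s_{wr})$ is true, so $\guard(w)$ and $\val(w)=\val(r)$ are forced. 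Clause (iii) follows from line~\ref{alg:rf:line10}: if $(w,r)\in\insta(\grfe(\gamma),\WRv)$ then by definition $s_{wr}$ is true, the pair is external and non-relaxed, so $s_{wr}\Rightarrow c_{wr}$ was added to $C_{\grf}$, forcing $c_{wr}$.

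For the reverse direction, assume (i), (ii), (iii) hold and fix an arbitrary constraint $c\in C_{\wf}\cup C_{\grf}$. Each $c$ has one of three shapes, and we dispatch by cases. If $c$ is $s_{wr}\Rightarrow(\guard(w)\wedge\val(r)=\val(w))$ (line~\ref{alg:rf:line9}), then either $\WRv(s_{wr})$ is false and $c$ trivially holds, or $(w,r)\in\insta(\rf(\gamma),\WRv)$ and (ii) yields the consequent. If $c$ is $\guard(r)\Rightarrow\bigvee_{s\in\rfsome}s$ (line~\ref{alg:rf:line11}), then either $\guard(r)$ is false, or (i) supplies a $w$ with $(w,r)\in\insta(\rf(\gamma),\WRv)$; since such a pair lies in $\prf(\gamma)$, its $s_{wr}$ was included in $\rfsome$ at line~\ref{alg:rf:line8}. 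Finally, if $c$ is $s_{wr}\Rightarrow c_{wr}$ added at line~\ref{alg:rf:line10}, then $(w,r)$ is external and non-relaxed; if $\WRv(s_{wr})$ is true we get $(w,r)\in\insta(\grfe(\gamma),\WRv)$ and (iii) delivers $c_{wr}$.

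The main obstacle, and the only real bookkeeping, is making the correspondence between the syntactic ``$s_{wr}$ in $\rfsome$'' test and the semantic membership in $\insta(\rf(\gamma),\WRv)$ vs.\ $\insta(\grfe(\gamma),\WRv)$ airtight, in particular ensuring that the $\uniproc$-motivated exclusion at line~\ref{alg:rf:line7} matches the definition $\prf(\gamma)=\{(w,r)\in\bigcup_\alpha\WR_\alpha\mid(r,w)\notin\po(\gamma)\}$. Once that bookkeeping is in place, every step above is a direct unfolding of definitions.
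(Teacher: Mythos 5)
Your proof is correct and follows essentially the same route as the paper's: both characterise the exact constraint sets that \myalg\ref{alg:rf} places in $C_{\wf}$ and $C_{\grf}$ (the paper via an explicit induction on the per-address loops, you by unfolding the algorithm) and then match the three constraint shapes to clauses (i)--(iii). Your version merely spells out the two directions of that equivalence in more detail than the paper does.
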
 \begin{proof}
\noindent\em{An induction on $R,W$ s.t.~$(\alpha,R) \in \reads$ and
$(\alpha,W) \in \writes$ for some address $\alpha$, then union for all $\alpha$
shows that $C_{\grf} = \{s_{wr} \implies c_{wr} \mid (w,r) \in \prf(\gamma) \cap
{\grfe_A}\}$, and $C_{\wf} = \bigcup_{\text{r is read}} \{\guard(r) \implies
\bigvee_{(w,r) \in \prf(\gamma)} s_{wr}\} \cup \{s_{wr} \implies (\guard(w) \wedge
\val(r) = \val(w)) \mid (w,r) \in \prf(\gamma) \}$. The first component of $C_{\wf}$ is
equivalent to i); the second to ii). $C_{\grf}$ is equivalent to iii).}
\end{proof}

The model described in \mysec\ref{sec:model} suggests that $\rf$ must be encoded
to be exclusive, \ie, to link a read to only one write.
An explicit encoding thereof, however, would be redundant, as
this is already enforced by $\ws$ and $\fr$.
Hence it suffices to consider \emph{at least one} write per read, as
\myalg\ref{alg:rf} does:
\begin{lemma}\label{rf:exclusive}
{$\uniproc(E,X) \implies \forall r. \neg(\exists w \neq w'. (w,r) \in \rf \wedge (w',r) \in \rf)$}
\end{lemma}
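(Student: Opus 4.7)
The plan is to prove the contrapositive by contradiction: assume the $\uniproc$ check holds, yet there is a read $r$ with two distinct writes $w \neq w'$ such that $(w,r) \in \rf$ and $(w',r) \in \rf$, and derive a cycle in ${\ws \cup \rf \cup \fr \cup \pio}$.

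First I would observe that $\rf$ relates a write to a read of the same address, so both $w$ and $w'$ write to $\loc(r)$. Since $\ws$ is, by well-formedness, a total order on writes to any given address, and $w \neq w'$, one of $(w,w') \in \ws$ or $(w',w) \in \ws$ must hold.

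Next I unfold the definition of $\fr$ from \mysec\ref{sec:model}, namely $(r,w) \in \fr \iff \exists w'' .\ (w'',r) \in \rf \wedge (w'',w) \in \ws$. In the first case, $(w,r) \in \rf$ and $(w,w') \in \ws$ witness $(r,w') \in \fr$; combined with the hypothesis $(w',r) \in \rf$, this yields the two-step cycle $r \xrightarrow{\fr} w' \xrightarrow{\rf} r$ in ${\rf \cup \fr}$, contradicting $\acyclic(\ws \cup \rf \cup \fr \cup \pio)$. The second case is symmetric, using $(w',r) \in \rf$ and $(w',w) \in \ws$ to derive $(r,w) \in \fr$, and then closing the cycle with $(w,r) \in \rf$.

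There is no real obstacle here; the argument is purely a book-keeping exercise on the definitions of $\rf$, $\ws$, and $\fr$, together with the totality of $\ws$ on writes to a common address. The content of the lemma is precisely that checking ``exactly one write per read'' on top of \myalg\ref{alg:rf}'s ``at least one write per read'' would be redundant with the constraints already produced for $\ws$ and $\fr$ (generated in \mysec\ref{sec:ws} and \mysec\ref{sec:fr}), and the proof simply makes this redundancy explicit.
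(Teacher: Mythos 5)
Your proof is correct and is essentially identical to the paper's: both invoke the per-address totality of $\ws$ to order $w$ and $w'$, derive a $\fr$ edge from the definition $(r,w'') \in \fr \iff \exists w''' .\ (w''',r) \in \rf \wedge (w''',w'') \in \ws$, and close a two-step cycle in $\rf \cup \fr$ contradicting $\uniproc$. The paper merely compresses the case split with a ``w.l.o.g.''; no difference in substance.
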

\begin{proof}
\noindent\em{By contradiction, have $w \neq w'$ s.t.~$(w,r) \in \rf$
and $(w', r) \in \rf$. By totality of $\ws$, $(w,w') \in \ws$ or $(w',w) \in
\ws$.  W.l.o.g.~have $(w,w') \in \ws$.  Then $(r,w') \in \fr$, \ie, a cycle in
$\rf \cup \fr$: $w',r,w'$, forbidden by $\uniproc$.} \end{proof}

\subsection{Write serialisation}
\label{sec:ws}
Given an \aes $\gamma$, \myalg\ref{alg:ws}
encodes the write serialisation $\ws$ as the set of constraints $C_{\ws}$.  By
definition, $\ws$ is a total order over writes to a given address.
\myalg\ref{alg:ws} implements the totality by ensuring that for two writes~$w
\neq w'$ to the same address either $c_{ww'}$ or $c_{w'w}$ holds. For implementation reasons we choose to express this as $\neg c_{ww'} \implies c_{w'w}$ rather than $c_{ww'} \vee c_{w'w}$.

\IncMargin{1em}
\begin{algorithm}[!t]
\SetKwInOut{Input}{input}
\SetKw{Output}{output:}
\DontPrintSemicolon

\Input{$\gamma$ \quad \Output{$C_{\ws}$}}
\BlankLine
$\writes := \{ (\alpha, \{ w_1\ldots w_n \} ) \mid \text{$w_i$ is write} \wedge \loc(w_i) = \alpha \}$\;
$C_{\ws} := \emptyset$; \ForEach{$\alpha \text{ s.t. } \exists W. (\alpha, W) \in \writes$}{\label{alg:ws:line2}
  \ForEach{$w \in W$}{\label{alg:ws:line3}
    \ForEach{$w' \in W \text{s.t.} \tr(w') \neq \tr(w)$}{\label{alg:ws:line4}
      $C_{\ws} := C_{\ws} \cup \{\neg c_{ww'} \Rightarrow c_{w'w}\}$\;\label{alg:ws:line5}
    }
  }
}
\BlankLine
\caption{Constraints for write serialisation}
\label{alg:ws}
\end{algorithm}
\DecMargin{1em}

\myalg\ref{alg:ws} groups the writes per address.  For each address $\alpha$ and
write~$w$ to $\alpha$ (lines~\ref{alg:ws:line2} and~\ref{alg:ws:line3}) we choose
another write~$w'$ to $\alpha$ (line~\ref{alg:ws:line4}), and build the disjunction
of clock constraints over $w$ and $w'$ (line~\ref{alg:ws:line5}). For
\ltest{iriw} we have $\writes = \{ (\text{\lstinline{x}}, \{ i_0, e \}),
(\text{\lstinline{y}}, \{ i_1, f\}) \}$, and the constraints: $(\neg c_{i_0 e}
\Rightarrow c_{e i_0}) \wedge (\neg c_{i_1 f} \Rightarrow c_{f i_1})$.

Note that we build the external $\ws$ only ($\ews$). With $\WW_\alpha$ the
pairs of writes to the address $\alpha$, and $\ws(\gamma)$ the set $\{(w,w') \in
\bigcup_\alpha WW_\alpha \mid c_{w'w} = \text{false}\}$, we have
$\ews(\gamma) \triangleq \ws(\gamma) \cap \{(w,w') \mid \tr(w) \neq
\tr(w')\}$. We compute the thread-local $\ws$ as part of $\ppo_A$, in
\myalg\ref{alg:ppo-new}. 
Given an input~$\gamma$ of \myalg\ref{alg:ws}, we now characterise the clock
constraints given by $C_{\ws}$.  Basically we show that \myalg\ref{alg:ws}
gives the clock constraints enconding $\ws$. The proof (omitted for brevity) is
by induction as for \mylem\ref{rf:sat}:
\begin{lemma}\label{ws:sat}{$(\Clocks,\Vals)$ satisfies
$\bigwedge_{c \in C_{\ws}} c$ iff it satisfies $\bigwedge_{(w,w') \in \ews}
c_{ww'}$.}  \end{lemma}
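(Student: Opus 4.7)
The plan is to follow the same inductive skeleton as \mylem\ref{rf:sat}. First, by an easy induction on the nested loops of \myalg\ref{alg:ws} (outer loop over addresses~$\alpha$ with $(\alpha,W) \in \writes$, and two nested loops picking $w, w' \in W$ with $\tr(w) \neq \tr(w')$), one shows
\[
C_{\ws} = \{\,\neg c_{ww'} \Rightarrow c_{w'w} \,\mid\, \exists \alpha.\ (w,w') \in \WW_{\alpha},\ w \neq w',\ \tr(w) \neq \tr(w')\,\}.
\]
This step is purely bookkeeping and mirrors the analogous extraction done in \mylem\ref{rf:sat}.

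For the $\Rightarrow$ direction, I would assume $(\Clocks,\Vals)$ satisfies $\bigwedge_{c \in C_{\ws}} c$ and pick an arbitrary $(w,w') \in \ews(\gamma)$. By the definition of $\ews$, we have $w \neq w'$, $\loc(w) = \loc(w')$, $\tr(w) \neq \tr(w')$, and (crucially) $c_{w'w}$ evaluates to \emph{false}, because membership in $\ws(\gamma)$ is characterised by the falsehood of $c_{w'w}$. By the extraction above, the clause $\neg c_{ww'} \Rightarrow c_{w'w}$ is in $C_{\ws}$, hence satisfied; combined with $c_{w'w}$ false, this forces $c_{ww'}$ to be true, as required.

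For the $\Leftarrow$ direction, I would assume $(\Clocks,\Vals)$ satisfies $\bigwedge_{(w,w') \in \ews(\gamma)} c_{ww'}$ and consider an arbitrary constraint $\neg c_{ww'} \Rightarrow c_{w'w}$ in $C_{\ws}$. If $c_{w'w}$ holds, the implication is immediate. Otherwise $c_{w'w}$ is false, so by the very definition of $\ws(\gamma)$ we get $(w,w') \in \ws(\gamma)$; since the algorithm only emits this clause when $\tr(w) \neq \tr(w')$, we further obtain $(w,w') \in \ews(\gamma)$. The hypothesis then yields $c_{ww'}$, which makes $\neg c_{ww'}$ false and satisfies the implication vacuously.

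I do not expect any real obstacle here: the only subtlety is the asymmetry in the definition $\ws(\gamma)=\{(w,w')\mid c_{w'w}=\text{false}\}$ (indexed by the \emph{opposite} constraint), which is what makes the algorithm's clause $\neg c_{ww'} \Rightarrow c_{w'w}$ (equivalently the totality disjunction $c_{ww'} \vee c_{w'w}$) correspond precisely to totality of $\ews$. Once this asymmetry is handled cleanly, both directions are one-line arguments. Thread-locality constraints (which are excluded from $\ews$ and reserved for $\ppo_A$ in \myalg\ref{alg:ppo-new}) drop out automatically from the guard $\tr(w) \neq \tr(w')$ at line~\ref{alg:ws:line4}, so no further case analysis is needed.
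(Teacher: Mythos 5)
Your proof is correct and follows exactly the route the paper indicates (the paper omits the proof, stating only that it is ``by induction as for \mylem\ref{rf:sat}''): an induction over the loops of \myalg\ref{alg:ws} to extract $C_{\ws}$ explicitly, followed by the two directions of the equivalence. The handling of the asymmetry in $\ws(\gamma)=\{(w,w')\mid c_{w'w}=\text{false}\}$ and the restriction to $\tr(w)\neq\tr(w')$ are both dealt with correctly.
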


We quantify over all pairs of writes to the same address to build~$\ws$. Thus
for $w_0,w_1,w_2$ in $\ws$ in a concrete execution, we build $(w_0,w_1),
(w_1,w_2)$ and the redundant $(w_0,w_2)$ in the symbolic world. This is
inherent to the totality of~$\ws$.  

\subsection{From-read}
\label{sec:fr}
Given an \aes $\gamma$, \myalg\ref{alg:fr} encodes from-read as the set of
constraints~$C_{\fr}$. Recall that ${(r,w) \in \fr}$ means $\exists w'.  (w',r)
\in \rf \wedge (w',w) \in \ws$. 
The existential quantifier corresponds to a disjunction: $\bigvee_{w' \text{is
write}} (w',r) \in \rf \wedge (w',w) \in \ws$.  Since this disjunction can be
large, which is undesirable in the expression simplification used in
the implementation, we rewrite it as a
conjunction of small implications, each of which are simplified in isolation: $\bigwedge_{w' \text{is write}} \left( (r,w) \in
\fr \Leftarrow (w',r) \in \rf \wedge (w',w) \in \ws\right)$.
Thus \myalg\ref{alg:fr} encodes from-read as a conjunction of the premise
variables $s_{wr}$ of $C_{\rf}$ and clock variables $c_{ww'}$ of $C_{\ws}$
introduced in \myalg\ref{alg:rf} and~\ref{alg:ws}.

\IncMargin{1em}
\begin{algorithm}[!t]
\SetKwInOut{Input}{input}
\SetKw{Output}{output:}
\DontPrintSemicolon

\Input{$\gamma$ \quad \Output{$C_{\fr}$}}
\BlankLine
$\reads := \{ (\alpha, \{ r_1\ldots r_n \} ) \mid \text{$r_i$ is read} \wedge \loc(r_i) = \alpha \}$\;
$\writes := \{ (\alpha, \{ w_1\ldots w_n \} ) \mid \text{$w_i$ is write} \wedge \loc(w_i) = \alpha \}$\;
$C_{\fr} := \emptyset$\; 
\ForEach{$\alpha \text{ s.t. } \exists R,W. (\alpha, W) \in \writes, (\alpha, R) \in \reads$}{\label{alg:fr:line3}
  \ForEach{$(w, w') \in W\times W \text{s.t.} w' \neq w$}{\label{alg:fr:line4}
    \ForEach{$r \in R$ with $\tr(r) \neq \tr(w)$}{\label{alg:fr:line5}
    {$C_{\fr} := C_{\fr} \cup \{(s_{w' r} \wedge c_{w' w} \wedge \guard(w)) \implies c_{r w}\}$\;\label{alg:fr:line6}}
  }
}
}
\BlankLine
\caption{Constraints for from-read}
\label{alg:fr}
\end{algorithm}
\DecMargin{1em}

Again, we collect the sets of reads and writes per address. \myalg\ref{alg:fr}
considers triples $(w', w, r)$ of events to the same address, where $(w', w)$
is in the write serialisation, and $(w', r)$ is in read-from. We enumerate the
pairs of writes in line~\ref{alg:fr:line4}, and then pick a read in
line~\ref{alg:fr:line5}.  For each such triple we add in
line~\ref{alg:fr:line6} the clock constraint $c_{r w}$ under the premise that
i)~$(w', r) \in \rf$, witnessed by $s_{w' r}$, ii)~$(w', w) \in \ws$, witnessed
by $c_{w' w}$, and that iii)~the write~$w$ actually takes place in a concrete
execution, \ie, $\guard(w)$ evaluates to true.

For \ltest{iriw} all guards are true. For \lstinline{x}, we obtain: $(s_{i_0 a}
\wedge c_{i_0 e}) \implies c_{a e}) \wedge\; ((s_{i_0 d} \wedge c_{i_0 e})
\implies c_{d e}) \wedge\; ((s_{e a} \wedge c_{e i_0}) \implies c_{a i_0})
\wedge\; ((s_{e d} \wedge c_{e i_0}) \implies c_{d i_0}$.
For example, $(s_{i_0 a} \wedge c_{i_0 e}) \Rightarrow c_{a e}$, reads ``if
$s_{i_{0}a}$ is true (\ie, if $a$ reads from $i_0$), and if $c_{i_{0}e}$ is
true (\ie, $(i_0,e) \in \ws$) then $c_{ae}$ is true (\ie, $a$ is in $\fr$
before $e$).''

Given an \aes $\gamma$, \myalg\ref{alg:fr} outputs $C_{\fr}$. 
Note that we compute here the external from-read only ($\efr$), and the
internal one as part of $\ppo_A$, in \myalg\ref{alg:ppo-new}. We
show that \myalg\ref{alg:fr} gives the clock constraints encoding~$\fr$. The
(omitted) proof is as for \mylem\ref{rf:sat}:
\begin{lemma}\label{fr:sat}{
$(\Clocks,\Vals, \WRv)$ satisfies $\bigwedge_{c \in C_{\fr}} c$ iff 
$(\Clocks,\Vals)$ satisfies $\bigwedge_{(r,w) \in
\insta(\efr(\gamma),\WRv)} c_{rw}$.} \end{lemma}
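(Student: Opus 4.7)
The plan is to mirror the induction scheme used for Lemma~\ref{rf:sat}. First, by induction over the address-indexed sets $(\alpha, W) \in \writes$ and $(\alpha, R) \in \reads$, unfolding the three nested loops of \myalg\ref{alg:fr} (lines~\ref{alg:fr:line3}--\ref{alg:fr:line6}), I would establish the explicit characterisation
\[
C_{\fr} \;=\; \{\, (s_{w'r} \wedge c_{w'w} \wedge \guard(w)) \Rightarrow c_{rw} \mid \loc(r)=\loc(w)=\loc(w'),\ w \neq w',\ \tr(r) \neq \tr(w) \,\}.
\]
The externality condition $\tr(r) \neq \tr(w)$ comes from line~\ref{alg:fr:line5}, the distinctness $w \neq w'$ from line~\ref{alg:fr:line4}, and the shared address from membership in $\reads$ and $\writes$. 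This characterisation reduces the lemma to a purely semantic equivalence.

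For the forward direction, I would fix an arbitrary $(r,w) \in \insta(\efr(\gamma),\WRv)$. By the definition $\dfnshort{(r,w) \in \fr}{\exists w'.\ (w',r) \in \rf \wedge (w',w) \in \ws}$, together with $\tr(r) \neq \tr(w)$ for externality, there exists $w'$ with $(w',r) \in \insta(\rf(\gamma),\WRv)$ and $(w',w) \in \ws(\gamma)$. From $(w',r) \in \insta(\rf(\gamma),\WRv)$ and the definition of $\insta$, the variable $s_{w'r}$ evaluates to true under $\WRv$; Lemma~\ref{rf:sat} additionally forces $\guard(w')$ to be true so that $w'$ genuinely concretises. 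By Lemma~\ref{ws:sat} applied to $(w',w) \in \ws(\gamma)$, the clock constraint $c_{w'w}$ holds under $\Clocks$, and $\guard(w)$ holds since $w$ appears as a concrete event in $\ws(\gamma)$. The premise of the corresponding implication in $C_{\fr}$ is therefore satisfied, forcing $c_{rw}$.

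For the reverse direction, I would assume that $(\Clocks,\Vals)$ satisfies $c_{rw}$ for every $(r,w) \in \insta(\efr(\gamma),\WRv)$, and take an arbitrary implication $(s_{w'r} \wedge c_{w'w} \wedge \guard(w)) \Rightarrow c_{rw}$ from the above characterisation of $C_{\fr}$. If the premise is false the implication trivially holds; otherwise $s_{w'r}$ being true means $(w',r) \in \insta(\rf(\gamma),\WRv)$ (note that the existence of the variable $s_{w'r}$ already ensures $(w',r) \in \prf(\gamma)$, i.e.~$(r,w') \notin \po(\gamma)$), while $c_{w'w}$ true yields $(w',w) \in \ws(\gamma)$ by Lemma~\ref{ws:sat}. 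Combined with $\tr(r) \neq \tr(w)$ from the characterisation, this gives $(r,w) \in \insta(\efr(\gamma),\WRv)$, and hence $c_{rw}$ holds by assumption.

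The main obstacle will be tracking the guards consistently: one must ensure that whenever $s_{w'r}$ is true, $\guard(w')$ is forced true (a consequence of the $C_{\wf}$ clauses from Lemma~\ref{rf:sat}), so that the triple $(w',w,r)$ really concretises to a semantic $\efr$ witness rather than to a spurious one whose source $w'$ does not exist. Beyond this guard bookkeeping and the mild subtlety that the pairs $(w',r)$ considered by \myalg\ref{alg:fr} are implicitly those for which $s_{w'r}$ was introduced by \myalg\ref{alg:rf}, the argument is a direct unfolding of definitions, paralleling the proof of Lemma~\ref{rf:sat}.
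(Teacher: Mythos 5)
Your overall strategy---unfold the three loops of \myalg\ref{alg:fr} to obtain an explicit description of $C_{\fr}$, then argue the two directions of the equivalence pointwise over triples $(w',w,r)$---is exactly the induction scheme the paper has in mind (it omits the proof, saying only that it proceeds as for \mylem\ref{rf:sat}), and your characterisation of $C_{\fr}$ is the correct one, including the side conditions $w\neq w'$, shared address, and $\tr(r)\neq\tr(w)$.

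The one step that does not follow from the stated hypotheses is your appeal to \mylem\ref{ws:sat} (and, less importantly, to \mylem\ref{rf:sat}). \mylem\ref{fr:sat} assumes only that the $C_{\fr}$ clauses (resp.\ the clock constraints over $\insta(\efr(\gamma),\WRv)$) are satisfied; it does not assume $C_{\ws}$ or $C_{\wf}$. Yet to pass from $(w',w)\in\ws(\gamma)$---which the paper defines by $c_{ww'}$ being \emph{false}, hence both guards true and $\clock{w'}\leq\clock{w}$---to the \emph{strict} constraint $c_{w'w}$ appearing in the premise of the $C_{\fr}$ clause, you invoke \mylem\ref{ws:sat}, whose conclusion is only available when $C_{\ws}$ is satisfied; symmetrically, in the reverse direction a true premise $s_{w'r}\wedge c_{w'w}\wedge\guard(w)$ need not produce a member of $\ws(\gamma)$ when $\guard(w')$ is false, since then $c_{w'w}$ is vacuously true and $s_{w'r}$ is unconstrained unless $C_{\wf}$ is assumed. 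The clean repair, consistent with the paper's remark that membership in the symbolic sets is given by a formula evaluated under $\Clocks$, $\Vals$ and $\WRv$, is to read $(r,w)\in\insta(\efr(\gamma),\WRv)$ as meaning precisely that some $w'$ satisfying the structural side conditions makes $s_{w'r}\wedge c_{w'w}\wedge\guard(w)$ true; with that reading the equivalence is immediate from your characterisation and no appeal to the other lemmas is needed. (The $\guard(w')$ bookkeeping you flag as the main obstacle then also disappears: $\guard(w')$ does not occur in the $C_{\fr}$ premise, so nothing in this lemma requires it to be forced true.)
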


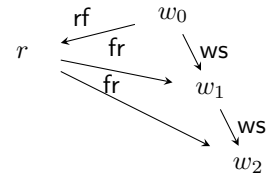
\begin{wrapfigure}[9]{r}{0.3\columnwidth}
  \vspace*{-2.5em}
  \centering
      \begin{tikzpicture}[>=stealth,thin,inner sep=0pt,text centered,shape=rectangle]

        \begin{scope}[minimum height=0.5cm,minimum width=0.5cm,text width=1.0cm]
          \node (w0)  at (0, 0)  {$w_0$};
          \node (r)  at (-2, -0.5)  {$r$};
          \node (w1)  at (0.5, -1)  {$w_1$};
          \node (w2)  at (1, -2)  {$w_2$};
        \end{scope}

        \path[->] (w0) edge node [above left=0.1cm] {$\rf$} (r);
        \path[->] (w0) edge node [right=0.1cm] {$\ws$} (w1);
        \path[->] (r) edge node [above=0.2cm] {$\fr$} (w1);
        \path[->] (r) edge node [above left=0.2cm] {$\fr$} (w2);
        \path[->] (w1) edge node [right=0.1cm] {$\ws$} (w2);
  \end{tikzpicture}
  \caption{\small{$\fr$~derives from~$\rf$ and~$\ws$}}
  \label{fig:fr}
\end{wrapfigure}
The $\fr$ defined above, together with $\ws$, does introduce possible
redundancies: given $(w_0,r) \in \rf$ with $(w_0, w_1) \in \ws$ and $(w_1, w_2)
\in \ws$, we have both $(r,w_1) \in \fr$ and $(r,w_2) \in \fr$ -- but the latter
is redundant as the same ordering is implied by $(r,w_1) \in \fr$ and $(w_1, w_2) \in
\ws$.
We could thus, instead, build a fragment of $\fr$, which we
write $\fr_0$. We define $\fr_0$ as $\{(r,w_1) \mid \exists w_0.
(w_0,r) \in \rf \wedge (w_0,w_1) \in \ws ~\wedge \nexists w'. ((w_0,w') \in
\ws \wedge (w',w_1) \in \ws)\}$. In \myfig\ref{fig:fr}, $(r,w_1)$ is in $\fr_0$
but not $(r,w_2)$, because there is a write ($w_1$) in $\ws$
between the write $w_0$ from which $r$ reads and $w_2$.  One can show that
$(r,w) \in \fr$ if $(r,w) \in \fr_0$ or there exists $w'$ s.t.~$(r,w') \in
\fr_0$ and $(w',w) \in \ws$, \ie, we can generate $\fr$ from $\fr_0$ and $\ws$.

\subsection{Preserved program order}
\label{sec:ppo}
For an architecture $A$ and an \aes~$\gamma$, \myalg\ref{alg:ppo-new} encodes
the preserved program order as the set $C_{\ppo}$.  In \mysec\ref{sec:model},
the function $\ppo_A$, which is part of the definition of $A$, determines if
$A$ relaxes a pair $(e,e')$ in program order in a concrete execution. For
example, RMO and Power relax read-read pairs, but PSO and stronger do not.

We reuse the notation $\ppo_A$ for the function collecting non-relaxed pairs in
symbolic program order. 
Unlike in~\mysec\ref{sec:model}, the non-relaxed pairs in symbolic program
order also include the internal safe read-from, internal write
serialisation, internal from-read, and the orderings due to Power's {\tt isync}
fence. We generate these constraints here, rather than
in~\myalg\ref{alg:rf}--\ref{alg:fr}, to limit the redundancies.  We
write~$\ppo_A(\gamma)$ for~$\ppo_A(\pobr(\gamma))$, or only~$\ppo_A$
if~$\gamma$ is clear from the context.  

\IncMargin{1em} \begin{algorithm}[!t] \SetKwInOut{Input}{input}
\SetKw{Output}{output:} \DontPrintSemicolon

\Input{$\gamma$, $A$ \quad \Output{$C_{\ppo}$}}
\BlankLine
$C_{\ppo} := \emptyset$; \ForEach{$S \in \po(\gamma) \wedge S \neq \emptyset$}{
  $S = [e]\mo{++}S'  \cap \{ e \mid \text{$e$ is not fence}\}$\;
  $\chains := [(e, \emptyset)]$\label{alg:ppo-new:line2}; $R := \text{true}$\; \label{alg:ppo-new:lineR}
\ForEach{$e' \in S'$}{\label{alg:ppo-new:line3}
    $T' := \emptyset$\label{alg:ppo-new:line4}\;
    \ForEach{$(e'', T'') \in \chains \text{ s.t. } \text{there is no } r \text{ s.t. }  \newline \phantom{xxxxxx} (e'', r) \in T' \text{ and }
    ((\guard(e') \wedge \guard(e'') \wedge R) \Rightarrow r)$}{\label{alg:ppo-new:line6new}
      $r_{e'' e'} := \notrelax A\; \gamma\; (e'',e')$\;\label{alg:ppo-new:line7new}
      \If{$r_{e'' e'} \text{is satisfiable}$}{\label{alg:ppo-new:line8new2}
        $C_{\ppo} := C_{\ppo} \cup \{r_{e'' e'} \Rightarrow c_{e'' e'}\}$\;\label{alg:ppo-new:line9new}
        $T' := T' \cup \{ (e'', r_{e'' e'}) \}$\;\label{alg:ppo-new:line10new2}
        \ForEach{$(e,r) \in T''$}{\label{alg:ppo-new:line11}
          \If{$\exists r'. (e, r') \in T'$}{
            $R := R \wedge (\rho \Leftrightarrow r' \vee (r_{e'' e'} \wedge r))$\label{alg:ppo-new:line12new2}\;
            $T' := \{ (e, \rho) \} \cup T' \setminus (e, r')$\;\label{alg:ppo-new:line14b}
          }
          \lElse{ $T' := \{ (e, r_{e'' e'} \wedge r) \} \cup T'$ }\label{alg:ppo-new:line15}
        }
      }
    }
    $\chains := [(e',T')]\mo{++}[\chains]$\;\label{alg:ppo-new:line14}
  }
}
\BlankLine
\caption{Constraints for preserved program order\label{alg:ppo-new}}
\end{algorithm}
\DecMargin{1em}

\myalg\ref{alg:ppo-new} avoids building redundant transitive closure
constraints, taking into account the guards of events: 
for two events $e_1,e_2$, we build a constraint iff $(e_1,e_2) \in
\ppo_A(\gamma)$. If, \eg, $\ppo_A(\pobr) = \pobr$ (on SC),
\myalg\ref{alg:ppo-new} creates constraints only for neighbouring events
in $\pobr(\gamma)$ in each control flow branch of the program.
 
As SSA and loop unrolling yield $\po(\gamma)$ (\ie, lists of symbolic events per thread)
rather than $\pobr(\gamma)$ (the corresponding DAG), we cannot construct
$C_{\ppo}$ by analysing control flow branches of the program.
Building $C_{\ppo}$ from $\po(\gamma)$ requires some more work.

To build $\ppo_{A}$, \myalg\ref{alg:ppo-new} uses the variable $\chains$, a
list of pairs~$(y, T)$. For a given $y$, its companion set~$T$ contains the
events $x$ occurring before $y$ in $\transc{\ppo_A}$ together with a formula
$r$ that characterises all paths of $\transc{\ppo_A}$ between $x$ and $y$. We
build $r$ from formulas $r_{e''e'}$ asserting that $(e'',e') \in \ppo_A$,
describing individual steps $(e'',e')$ of a path between $x$ and $y$.

We compute the formula $r_{e''e'}$ at line~\ref{alg:ppo-new:line7new}, using
the function $\notrelax$.  Given an \aes~$\gamma$ and a pair $(e'',e')$,
$\notrelax A \; \gamma\; (e'',e')$ returns a formula $r_{e'' e'}$ expressing
the condition under which $(e'',e')$ is not relaxed.  For PSO or stronger
models, $\notrelax$ only needs to take the direction of the events and their
addresses into account. For instance, TSO relaxes write-read pairs, but nothing
else. If a pair is necessarily relaxed, $\notrelax$ returns false, otherwise
$\notrelax A \; \gamma\; (e'',e') = \guard(e'') \wedge \guard(e')$.  For models
weaker than PSO, such as Alpha, RMO or Power, $\notrelax$ has to determine
data- and control dependencies, and handle Power's {\tt isync} fence. We
resolve data dependencies via a definition-use data flow analysis~\cite{asu86}
on the program part in program order between the two events. Control
dependencies use the data dependency analysis to test whether there exists a
branching instruction in program order between the events such that the
branching decision is in data dependency with the first event. For {\tt isync},
the approach is similar, except that in addition there must be an {\tt isync}
in program order between the branch and the second event.  We then add the
guard of the fence to the conjunction returned by $\notrelax$. 
 
For a given $e'$, we initialise its companion set $T'$ at
line~\ref{alg:ppo-new:line4}, then increment it in
lines~\ref{alg:ppo-new:line10new2}--\ref{alg:ppo-new:line15}. In
line~\ref{alg:ppo-new:line14b}, we use fresh variables~$\rho$ constrained in
the formula~$R$ (line~\ref{alg:ppo-new:line12new2}) to avoid
repeating sub-formulas, as is standard in, \eg, CNF
encodings~\cite{DBLP:conf/date/ChambersMV09}.  In
line~\ref{alg:ppo-new:line7new} we compute the condition~$r_{e'' e'}$ for
$(e'', e')$ not being relaxed on $A$ for each~$e''$ in $\chains$ (unless
skipped for transitivity, see below).  We generate the constraint $r_{e''e'}
\implies c_{e''e'}$ iff $r_{e'' e'}$ is satisfiable
(line~\ref{alg:ppo-new:line9new}), \ie, $(e'',e')$ is not relaxed on~$A$.


Now, suppose $e_1,e_2,e_3$ on the same thread all in $\ppo_A$; the companion
set of $e_2$ is $\{(e_1, r_{e_1 e_2})\}$, because $(e_1,e_2) \in \ppo_A$ and
there is no other event before $e_1$ on the thread.  Suppose that
\myalg\ref{alg:ppo-new} has already built the beginning of the chain formed by
$e_1$, $e_2$ and $e_3$, so that $\chains=[(e_2, \{ (e_1, r_{e_1 e_2}) \}),
(e_1, \emptyset)]$ (observe that the chains are in reverse order of $\po$).
At line~\ref{alg:ppo-new:line3}, for each remaining~$e'$ on a given thread,
\ie, $e_3$ in our example, \myalg\ref{alg:ppo-new} follows
lines~\ref{alg:ppo-new:line4}--\ref{alg:ppo-new:line9new} and adds a constraint
\wrt the immediate predecessor $e_2$ of~$e_3$ in $\ppo_A$. The subsequent
elements of $\chains$ ($e_1$ in our example) are also candidates for a clock
constraint. 

We do not add any constraint if $(e_1,e_3)$ is guaranteed to be in
$\transc{\ppo_A}$, as follows. Any remaining element of $\chains$ that belongs
to the companion set $T''$ of $e''$ is added to $T'$ at
lines~\ref{alg:ppo-new:line11}--\ref{alg:ppo-new:line15}. As an instance,
recall that $e_1$ is in the companion set of $e_2$. Thus, after generating the
constraint $c_{e_2 e_3}$ at line~\ref{alg:ppo-new:line9new}, we add $e_1$ with
its transitivity condition~$r_{e_2 e_3} \wedge r_{e_1 e_2}$ to $T'$ at
line~\ref{alg:ppo-new:line15}.  Then, line~\ref{alg:ppo-new:line6new} iterates
over the rest of $\chains$, \ie, $(e_1,\emptyset)$.  With the updated set~$T'$
the test $(e_1, r_{e_1}) \in T'$ yields $r_{e_1} =r_{e_2 e_3} \wedge r_{e_1
e_2}$, and thus amounts to checking the validity of $(\guard(e_3) \wedge
\guard(e_1)) \Rightarrow (r_{e_2 e_3} \wedge r_{e_1 e_2})$.  Remember that,
unless there is an {\tt isync}, the conditions $r_{xy}$ amount to conjunctions
over guards, hence in our example we are checking the validity of $(\guard(e_3)
\wedge \guard(e_1)) \Rightarrow \guard(e_3) \wedge \guard(e_2) \wedge
\guard(e_1)$.  If all three events $e_1$, $e_2$ and $e_3$ are on the same
control flow branch, the implication is valid because all guards are equal.
This makes the test of line~\ref{alg:ppo-new:line6new} fail and $(e_1, e_3)$
will not be considered for adding another constraint $c_{e_1 e_3}$, which would
have been redundant.  When the implication is not valid, the test of
line~\ref{alg:ppo-new:line6new} succeeds; then we add another constraint
$c_{e_1 e_3}$, as this is not redundant here.

This elaboration on guards is essential as witnessed by the following variant of
our example: assume, in contrast to the above, that $e_2$ is not a dominator of
$e_3$ on the control flow graph. This might occur in a program fragment {\tt
(if $e_1$ then $e_2$); $e_3$}, where the guard of~$e_2$ would be
different from that of~$e_1$ or~$e_3$.  If we were to skip $(e_1, e_3)$ as
above, the constraints would be insufficient to enforce the order of $e_1$
before $e_3$ when $\guard(e_2)$ evaluates to false. In this case, the premises
$\guard(e_1) \wedge \guard(e_2)$ and $\guard(e_2) \wedge \guard(e_3)$ of
$c_{e_1 e_2}$ and $c_{e_2 e_3}$, respectively, are false, hence the clock
constraints $\clock{e_1} < \clock{e_2}$ and $\clock{e_2} < \clock{e_3}$ are not
enforced, leaving the order of $(e_1, e_3)$ unconstrained.

We illustrate \myalg\ref{alg:ppo-new} on the \aes $\gamma$ of \ltest{iriw} (\cf
\myfig\ref{ssa:iriw}). \myalg\ref{alg:ppo-new} proceeds over $\po(\gamma)$,
equal to $\{ [i_0, i_1], [a, b], [c, d], [e], [f]\}$ for \ltest{iriw}.  Given a
non-empty list $S$ of $\po(\gamma)$, \eg, $S=[a,b]$ corresponding to $P_0$, the
first non-fence event~$a$ of~$S$ initialises at line~\ref{alg:ppo-new:line2}
the variable $\chains$ (explained below in detail).  The loop at
line~\ref{alg:ppo-new:line3} proceeds with the tail $S'$ of the list~$S$.  Thus
for $P_0$ at this point we have $\chains = [(a, \emptyset)]$ and
\myalg\ref{alg:ppo-new} proceeds with $S' = [b]$. 

The contents of $\chains$ depend on the architecture $A$, as \ltest{iriw}
shows.  For $P_0$, recall that $\chains = [(a,\emptyset)]$ and
only $b$ remains in $S'$.  If $A$ relaxes read-read pairs, \eg, RMO or weaker,
then $(a,b)$ is relaxed. Thus we do not add any clock constraint to
$C_{\ppo}$ at line~\ref{alg:ppo-new:line9new} and eventually $\chains =
[(b,\emptyset), (a,\emptyset)]$ in line~\ref{alg:ppo-new:line14}.
If $A$ does not relax read-read pairs, \eg, PSO or stronger, we add $c_{a b}$
to $C_{\ppo}$ at line~\ref{alg:ppo-new:line9new} and add~$a$ with the guard
conjunction $\text{true}$ to~$T'$ at
line~\ref{alg:ppo-new:line15}.  Thus $\chains = [(b, \{(a, \text{true})\}),
(a,\emptyset)]$.
Let us now
characterise the output of \myalg\ref{alg:ppo-new}, given an input $\aes$
$\gamma$:
\begin{lemma}\label{ppo:charac}
{\myalg\ref{alg:ppo-new} outputs
$\{r_{xy} \implies c_{xy} \mid (x,y) \in \ppo_A\}$.}
\end{lemma}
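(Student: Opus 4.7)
The plan is to prove the characterization by induction on the list $S$ of events in $\po(\gamma)$ being processed, with an inner induction on the position of the current event $e'$ within $S$. First, I would establish an auxiliary characterization of the helper $\notrelax$: for any pair $(e'', e')$ on the same thread, $\notrelax\,A\;\gamma\;(e'', e')$ returns a formula that is satisfiable iff $(e'', e') \in \ppo_A(\gamma)$, and whose satisfying valuations capture exactly when the pair is non-relaxed. This follows by case analysis on $A$. For PSO and stronger, $\notrelax$ inspects only direction and address, directly matching the definition of $\ppo_A$. For weaker models (Alpha, RMO, Power) one invokes the correctness of the definition-use dataflow analysis, the control-dependency test, and the {\tt isync} handling described in the algorithm's narrative.

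Next I would fix an invariant on $\chains$ and prove it by induction on iterations of the outer loop: after processing $e_n$, $\chains = [(e_n, T_n), \dots, (e_0, T_0)]$, where $e_0, \dots, e_n$ are the non-fence events of the current thread in reverse program order, and for each entry $(e_i, r) \in T_j$ the formula $r$, taken together with the auxiliary formula $R$, is logically equivalent to the disjunction over all chains $e_i = f_0, f_1, \dots, f_k = e_j$ of intermediate processed events (with each successive pair in $\ppo_A$) of the conjunction $\bigwedge_m r_{f_m f_{m+1}}$. The fresh variables $\rho$ introduced at line~\ref{alg:ppo-new:line14b} under the constraint at line~\ref{alg:ppo-new:line12new2} maintain this invariant in linear-size form rather than expanding into exponentially large disjunctions.

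Soundness of the output is then immediate: each constraint emitted at line~\ref{alg:ppo-new:line9new} has the form $r_{e''e'} \Rightarrow c_{e''e'}$ and is produced only after the satisfiability test at line~\ref{alg:ppo-new:line8new2}, which by the auxiliary characterization forces $(e'', e') \in \ppo_A$. For completeness, given any $(x,y) \in \ppo_A$, both events lie in a common list $S$; when the outer loop reaches $e' = y$, the invariant places $(x, T)$ for some $T$ among the entries of $\chains$ scanned by line~\ref{alg:ppo-new:line6new}, so either the constraint $r_{xy} \Rightarrow c_{xy}$ is emitted directly, or the skipping test already records a transitively valid derivation of $r_{xy}$ in $T'$.

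The main obstacle is precisely this skipping: literally, the algorithm does not emit $r_{xy} \Rightarrow c_{xy}$ for every ppo pair, only for those not already implied. I would reconcile this with the statement of the lemma by reading the right-hand side modulo entailment, arguing that $C_{\ppo} \cup \{R\}$ logically implies $\{r_{xy} \Rightarrow c_{xy} \mid (x,y) \in \ppo_A\}$. The guard-sensitive form of the skipping test, comparing $\guard(e') \wedge \guard(e'') \wedge R$ against the stored $r$, is essential here: it prevents unsound omissions as in the {\tt if $e_1$ then $e_2$; $e_3$} scenario discussed after line~\ref{alg:ppo-new:line14b}, where skipping $(e_1, e_3)$ based on an intermediate $e_2$ with a possibly-false guard would leave the concrete order of $e_1$ and $e_3$ unconstrained.
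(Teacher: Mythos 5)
Your proof follows essentially the same route as the paper's: both rest on loop invariants for the outer and inner loops that characterise the companion sets carried in $\chains$ as disjunctions, over all $\ppo_A$-paths between two events, of the conjunctions of the step conditions $r_{e_i e_{i+1}}$ (the paper's $T_e^{e'}$ is exactly your chain invariant, with the $\rho$/$R$ machinery as its linear-size realisation), and both then conclude by comparing $C_{\ppo}$ with $\{r_{xy} \implies c_{xy} \mid (x,y) \in \ppo_A\}$. The one point where you genuinely diverge is your closing observation that, because of the transitivity-based skipping, the algorithm does not literally emit $r_{xy} \implies c_{xy}$ for every $(x,y) \in \ppo_A$, so the lemma must be read modulo entailment by $C_{\ppo} \cup \{R\}$. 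The paper instead asserts a ``double inclusion'' of the two sets (details omitted), which taken as set equality is in tension with the algorithm's stated purpose of suppressing redundant transitive constraints; your entailment reading is the one actually needed downstream, since \mylem\ref{ppo:sat} only requires logical equivalence of the conjunctions $\bigwedge_{c \in C_{\ppo}} c$ and $\bigwedge_{(x,y)\in\ppo_A} c_{xy}$, and that equivalence is exactly what your transitivity-of-clocks argument delivers. So your version is, if anything, the more defensible formalisation of the same argument.
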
 
\begin{proof}
\noindent\em{We write $L_1$ (resp.~$L_2$) for the loop from
line~\ref{alg:ppo-new:line3} to~\ref{alg:ppo-new:line14}
(resp.~\ref{alg:ppo-new:line6new} to~\ref{alg:ppo-new:line15}).  $L_1$
maintains the invariant that 
$S=\revdel(\chains)\mo{++}S'$, where $\revdel$ reverses its argument and
deletes $T$ for each element $(e,T)$ of its argument. We write
$\wpath_{x,y}^{e}(e_1,\dots,e_n)$ when there is a path from $x$ to $y$ in
$\ppo_A(\gamma)$ passing by $e$, \ie, $e_1=x$ and $e_n=y$ and $\forall i.
(e_i,e_{i+1}) \in \ppo_A(\gamma)$ and $\exists i. e_i=e$. $L_2$ maintains the
invariant that $T'=\bigcup_{e \in [e'',e']} T_e^{e'}$, where $e \in [e'',e']$
means $(e'',e) \in \po \wedge (e,e') \in \po$, and $T_e^{e'} = \{(x,r_x) \mid
r_x = \bigvee_{\wpath_{x,y}^{e}(e_1,\dots,e_n)} \bigwedge_{1 \leq i \leq n}
r_{e_i e_{i+1}}\}$.  We conclude by double inclusion of $C_{\ppo}$ and
$\{r_{xy} \implies c_{xy} \mid (x,y) \in \ppo_A\}$, omitted for
brevity.}\end{proof}
Since the $r_{xy}$ are guard conditions, we just need to evaluate the guards to
evaluate them. We show that \myalg\ref{alg:ppo-new} gives the clock constraints
encoding $\ppo$; the proof is immediate by \mylem\ref{ppo:charac}:
\begin{lemma}\label{ppo:sat}{$(\Clocks,\Vals)$ satisfies
$\bigwedge_{c \in C_{\ppo}} c$ iff it satisfies $\bigwedge_{(x,y) \in \ppo_A}
c_{xy}$.}
\end{lemma}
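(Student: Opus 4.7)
The plan is to derive this lemma as an essentially immediate consequence of \mylem\ref{ppo:charac}. By that lemma, $\bigwedge_{c \in C_{\ppo}} c$ can be rewritten as $\bigwedge_{(x,y) \in \ppo_A}(r_{xy} \implies c_{xy})$, so the task reduces to showing the propositional equivalence between $\bigwedge_{(x,y) \in \ppo_A}(r_{xy} \implies c_{xy})$ and $\bigwedge_{(x,y) \in \ppo_A} c_{xy}$ under $(\Clocks,\Vals)$.

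The $\Leftarrow$ direction is immediate, as $c_{xy}$ entails $r_{xy} \implies c_{xy}$ regardless of $r_{xy}$. For $\Rightarrow$, I plan to appeal to the shape of $r_{xy}$ produced by $\notrelax$ (\mysec\ref{sec:ppo}): for every pair $(x,y) \in \ppo_A$, $r_{xy}$ is a conjunction over event guards containing at least $\guard(x) \wedge \guard(y)$, extended, for Power with {\tt isync}, by the guard of the intervening fence. Recalling $c_{xy} \triangleq (\guard(x) \wedge \guard(y)) \implies \clock{x} < \clock{y}$, a two-way case split on the value of $r_{xy}$ under $\Vals$ collapses $r_{xy} \implies c_{xy}$ to the same assertion as $c_{xy}$: when $r_{xy}$ is true, both guards are true and both formulas reduce to $\clock{x} < \clock{y}$; when $r_{xy}$ is false, the premise $\guard(x) \wedge \guard(y)$ of $c_{xy}$ is also false (for the non-fence case this is literal; for the {\tt isync} case it follows from the convention, used throughout \mysec\ref{sec:aes-to-ces}, that symbolic $\ppo_A$ is read concretised against $\Vals$, so pairs whose fence guard evaluates to false are absent).

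The only subtle point I expect is the handling of the extra fence-guard conjunct in $r_{xy}$ for Power's {\tt isync}, which at first sight seems to make $r_{xy} \implies c_{xy}$ strictly weaker than $c_{xy}$. I plan to resolve this in the same way as the authors handle concretisation elsewhere, namely by interpreting $(x,y) \in \ppo_A$ relative to $\Vals$, so that \myalg\ref{alg:ppo-new}'s pruning (the satisfiability check on $r_{e''e'}$ at line~\ref{alg:ppo-new:line8new2} and the guard-conjunction tests at line~\ref{alg:ppo-new:line6new}) lines up exactly with the pairs for which $c_{xy}$ has nontrivial content. With that reading, the whole argument is purely propositional and follows in a few lines from \mylem\ref{ppo:charac}, as advertised.
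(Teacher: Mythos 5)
Your proof is correct and follows essentially the same route as the paper's: the paper likewise derives Lemma~\ref{ppo:sat} immediately from Lemma~\ref{ppo:charac}, noting that the $r_{xy}$ are guard conditions, so that $r_{xy} \implies c_{xy}$ collapses to $c_{xy}$ once the guarded definition of $c_{xy}$ and the valuation-dependent reading of membership in $\ppo_A$ are taken into account. Your explicit case split and the treatment of the {\tt isync} fence guard merely spell out details the paper leaves implicit.
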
 

\subsection{Memory fences and cumulativity}
\label{sec:ab}
\begin{algorithm}[!t]
\SetKwInOut{Input}{input}
\SetKw{Output}{output:}
\DontPrintSemicolon

\Input{$\gamma$, $A$ \quad \Output{$C_{\ab'}$}}
\BlankLine
$C_{\ab'} := \emptyset$; \ForEach{$S \in \po(\gamma) \wedge S \neq \emptyset$}{\label{alg:ab:line1}
  $\fences := \{ s \mid s \in S \wedge \text{s is fence} \}$\;\label{alg:ab:line2}
  \ForEach{$e \in S \setminus \fences$}{\label{alg:ab:line3}
    \ForEach{$s \in \fences$}{\label{alg:ab:line4}
      \uIf{$(e,s) \in \po(\gamma)$}{\label{alg:ab:line5}
        $C_{\ab'} := C_{\ab'} \cup \{\guard(s) \Rightarrow  c_{e s}\}$\;\label{alg:ab:line6}
        \If{$A$ is not store atomic}
        {\ForEach{$(w,e) \text{ being a w-r pair s.t.}$ 
                 \phantom{xx}$\loc(w) = \loc(e)$ and \phantom{xx}$\tr(w) \neq \tr(e)$}{\label{alg:ab:line8}
          {$C_{\ab'} := C_{\ab'} \cup \{(\guard(s) \wedge s_{w e}) \implies c_{w s}\}$\;\label{alg:ab:line9}}}
        }
      }\lElse{$C_{\ab'} : =C_{\ab'} \cup \{\guard(s) \implies c_{se}\}$\;\label{alg:ab:line10}
        \Indentp{4ex}\If{$A$ is not store atomic}
        {\ForEach{$(e,r) \text{ being a w-r pair s.t.}$
                  \phantom{xx}$\loc(e) = \loc(r)$ and \phantom{xx}$\tr(e) \neq \tr(r)$}{\label{alg:ab:line13}
          {$C_{\ab'} := C_{\ab'} \cup \{(\guard(s) \wedge s_{er}) \implies c_{sr}\}$\;\label{alg:ab:line14}}}
        }
      }
    }
  }
}
\BlankLine
\caption{Constraints for memory fences}
\label{alg:ab}
\end{algorithm}
Given an architecture $A$ and an \aes $\gamma$, \myalg\ref{alg:ab} encodes the
fence orderings as the set $C_{\ab'}$. A fence~$s$ potentially induces
orderings over all $(e, e')$ s.t.~$e$ is in $\po$ before~$s$ and~$e'$
after~$s$, which is quadratic in the number of events in $\po$ for each fence.
Cumulativity constraints depend on the read-from to appear in the
concrete event structure, and again these are paired with all events before or after (in $\po$) a fence. 
We alleviate this with the fence events (see below).  The implementation
supports x86's \texttt{mfence} and Power's \texttt{sync}, \texttt{lwsync} and
\texttt{isync}. We handle \texttt{isync} as part of $\ppo$ in
\myalg\ref{alg:ppo-new}. We first present x86's \texttt{mfence} and Power's
\texttt{sync}, then \texttt{lwsync}. 

\paragraph{Fences \texttt{mfence} and \texttt{sync}}
\myalg\ref{alg:ab} applies its procedure to $\po(\gamma)$
(line~\ref{alg:ab:line1}). For example, assume \texttt{sync} fences between the
read-read pairs of $P_0$ and $P_1$ of \ltest{iriw}, associated with the fences
events $s_0$ and $s_1$.  We then have $\po(\gamma) = \{ [i_0, i_1], [a, s_0,
b], [c, s_1, d], [e], [f] \}$.

For each list~$S$ of $\po(\gamma)$ (\ie, per thread), we compute at
line~\ref{alg:ab:line2} the set $\fences$, containing the fence events of~$S$.
For \ltest{iriw}, $\fences$ is empty for $P_2$ and $P_3$.  For $P_0$, we have
$\fences = \{ s_0 \}$, and $\{ s_1 \}$ for $P_1$.  We test at
line~\ref{alg:ab:line5} for each pair $(e,s)$ s.t.  $e$ is a non-fence event
and $s$ is fence whether $(e,s)$ is in program order, or rather $(s,e)$. We then
build the according non-cumulative constraints, and constraints for
A-cumulativity (for $(e,s)$ in program order) or B-cumulativity (otherwise).

For non-cumulativity, if $e$ is before (resp.~after) $s$ in program order,
\myalg\ref{alg:ab} produces at line~\ref{alg:ab:line6} the clock constraint
$c_{es}$ (resp.~$c_{se}$ at line~\ref{alg:ab:line10}). In \ltest{iriw}, all
guards are true, hence we generate $c_{a s_0}$ (resp.~$c_{c s_1}$) for the
event $a$ (resp.~$c$) in $\po$ before the fence $s_0$ (resp.~$s_1$) on $P_0$
(resp.~$P_1$).  Line~\ref{alg:ab:line10} generates $c_{s_0 b}$
(resp.~$c_{s_1 d}$) for~$b$ (resp.~$d$), in $\po$ after the fence $s_0$
(resp.~$s_1$) on $P_0$ (resp.~$P_1$).

If $A$ relaxes store atomicity, we build cumulativity constraints.  For
A-cumulativity, \myalg\ref{alg:ab} adds at line~\ref{alg:ab:line9} the
constraint $s_{we} \implies c_{ws}$, for each~$(w,e)$ s.t.~$e$ is in $\po$
before the fence $s$, and $e$ reads from the write $w$. The constraint reads
``if $\guard(s)$ is true (\ie, the fence is concretely executed) and if
$s_{we}$ is true (\ie, $e$ reads from $w$), then $c_{ws}$ is true (\ie, there
is a global ordering, due to the fence $s$, from $w$ to $s$)''. All other
constraints, \ie, the actual ordering of~$w$ before some event~$e'$ in $\po$
\emph{after}~$s$, follow by transitivity.
We handle B-cumulativity in a similar way, given in
lines \ref{alg:ab:line13} and \ref{alg:ab:line14}.

As Power relaxes store atomicity, the {\tt sync} fences between the read-read
pairs of \ltest{iriw} create A-cumulativity constraints, namely for $s_0$ (and
analogous ones for $s_1$): $(s_{i_0 a} \Rightarrow c_{i_0 s_0}) \wedge (s_{e a}
\Rightarrow c_{e s_0})$.  
 
If we were not using fence events, we would create a clock constraint $c_{we'}$
for every~$e'$ in program order after the fence~$s$ to implement
\mysec\ref{sec:model}, for each fence~$s$. Thus the non-cumulative part would be
cubic already, whereas fence events yield a quadratic number at most.
For cumulativity, we would obtain a constraint for every pair $(r, e')$
s.t.~$(w,r) \in \rfe$ and~$r$ is in $\po$ before the fence~$s$. The resulting
number of constraints is the number of such pairs $(r,e')$ times the number of
pairs $(w,r)$, \ie, cubic in the number of events \emph{per fence}~$s$.
Furthermore cases of both A- and B-cumulativity at the same fence~$s$ need to be
taken into account, resulting in even higher complexity. Fence events,
however, reduce all these cases, including the combined one, to cubic complexity
(all triples of external writes, reads, and fence events).


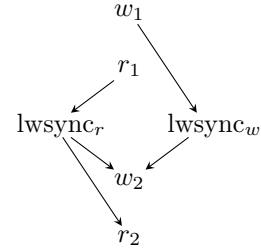
\begin{wrapfigure}[11]{r}{0.3\columnwidth}
  \vspace*{-3em}
  \centering
      \begin{tikzpicture}[>=stealth,thin,inner sep=0pt,text centered,shape=rectangle]

        \begin{scope}[minimum height=0.3cm,minimum width=0.5cm,text width=1.0cm]
          \node (w1)  at (0, 0)  {$w_1$};
          \node (r1)  at (0, -0.75)  {$r_1$};
          \node (lwsyncr)  at (-1, -1.5)  {lwsync$_r$};
          \node (lwsyncw)  at (1, -1.5)  {lwsync$_w$};
          \node (w2)  at (0, -2.25)  {$w_2$};
          \node (r2)  at (0, -3)  {$r_2$};
        \end{scope}

        \path[->] (w1) edge (lwsyncw);
        \path[->] (r1) edge (lwsyncr);
        \path[->] (lwsyncr) edge (r2);
        \path[->] (lwsyncr) edge (w2);
        \path[->] (lwsyncw) edge (w2);
  \end{tikzpicture}
\caption{Constraints for {\tt lwsync}}
  \label{fig:lwsync}
\end{wrapfigure}
\paragraph{Fence \texttt{lwsync}}
As {\tt lwsync} does not order write-read pairs (\cf \mysec\ref{sec:model}), we
need to avoid creating a constraint~$c_{wr}$ between a write~$w$ and a read~$r$
separated by an {\tt lwsync}.  To do so, we use two distinct clock
variables~$\clock{s}^r$ and~$\clock{s}^{w}$ for an~\texttt{lwsync}~$s$.  This
avoids the wrong transitive constraint~$c_{wr}$ implied by~$c_{ws}$
and~$c_{sr}$.
\myfig\ref{fig:lwsync} illustrates this setup: the write-read pair $(w_1, r_2)$
will not be ordered by any of the constraints, but all other pairs are ordered.

To create a clock constraint in
lines~\ref{alg:ab:line6},~\ref{alg:ab:line9},~\ref{alg:ab:line10},
or~\ref{alg:ab:line14}, we then pick one or both of the clock variables, as
follows.
If~$e$ is a read, the clock constraint is $\clock{e} < \clock{s}^r$ when $e$ is
before $s$, \ie, lines~\ref{alg:ab:line6} or~\ref{alg:ab:line9} (or $\clock{s}^r
< \clock{e}$ if $e$ is after, \ie, lines~\ref{alg:ab:line10}
or~\ref{alg:ab:line14}).  If~$e$ is a write preceding~$s$ (\ie,
lines~\ref{alg:ab:line6} or~\ref{alg:ab:line9}), the clock constraint is
$\clock{e} < \clock{s}^{w}$. Finally, if~$e$ is a write after~$s$, \ie,
lines~\ref{alg:ab:line10} or~\ref{alg:ab:line14}, the clock constraint is the
conjunction $(\clock{s}^{w} < \clock{e}) \wedge (\clock{s}^{r} < \clock{e})$.
To make {\tt lwsync} non-cumulative (\cf footnote in \mysec\ref{sec:model}), we
just need to disable the lines
\ref{alg:ab:line8},\ref{alg:ab:line9},\ref{alg:ab:line13} and
\ref{alg:ab:line14}.

In \ltest{iriw}, if we use \texttt{lwsync} instead of \texttt{sync} as
discussed above, we obtain the following constraints:
$(\clock{a} < \clock{s_0}^r) \wedge (\clock{s_0}^r < \clock{b}) \wedge
(s_{i_0 a} \implies \clock{i_0} < \clock{s_0}^{w}) \wedge (s_{e a} \implies
\clock{e} < \clock{s_0}^{w})$.
These constraints will \emph{not} order the writes~$i_0$ or~$e$ with the
read~$b$, because~$i_0$ and~$e$ are ordered \wrt to $\clock{s_0}^{w}$, but~$b$
is only ordered \wrt the distinct $\clock{s_0}^r$.  This corresponds to the
fact that placing {\tt lwsync} fences in \ltest{iriw} does not forbid the
non-SC execution. 

%
Given an \aes $\gamma$, \myalg\ref{alg:ab} outputs $C_{\ab'}$. 
We let $\rfe(\gamma)$ be $\{(w,r) \in \bigcup_\alpha \WR_\alpha \mid \tr(w)\neq
\tr(r) \wedge s_{wr}\}$. We write $\ab'(\gamma)$ for $\{(e_1,e_2) \mid
\nc'(e_1,e_2) \vee \ac'(e_1,e_2) \vee \bc'(e_1,e_2)\}$, where $\nc'(e_1,e_2)$
corresponds to non-cumulativity, \ie, $(e_1,e_2) \in \po(\gamma) \wedge
((\guard(e_1) \wedge e_1 \text{ is fence}) \vee (\guard(e_2) \wedge e_2 \text{
is fence})) \wedge \text{not both $e_1$ and $e_2$ are fences}$, $\ac'(e_1,e_2)$
to A-cumulativity, \ie, $\exists r. (e_1,r) \in \rfe(\gamma) \wedge (r,e_2) \in
\po(\gamma) \wedge \guard(e_2) \wedge e_2 \text{ is fence}$, and
$\bc'(e_1,e_2)$ corresponds to B-cumulativity, \ie, $\exists w.  (w,e_2) \in
\rfe(\gamma) \wedge (e_1,w) \in \po(\gamma) \wedge \guard(e_1) \wedge e_1
\text{ is fence}$. 
We show that \myalg\ref{alg:ab} gives the clock constraints encoding
$\ab'$. The proof is immediate like for \mylem\ref{rf:sat}:
\begin{lemma}\label{ab':sat}{$(\Clocks,\Vals,\WRv)$
satisfies $\bigwedge_{c \in C_{\ab'}} c$ iff $(\Clocks,\Vals)$
satisfies $\bigwedge_{(x,y) \in \insta(\ab'(\gamma),\WRv)} c_{xy}$.}
\end{lemma}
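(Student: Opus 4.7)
The plan is to follow the structure used for Lemmas~\ref{rf:sat}, \ref{ws:sat}, \ref{fr:sat} and \ref{ppo:sat}: first give a syntactic characterization of the output $C_{\ab'}$ of \myalg\ref{alg:ab}, then reduce the biconditional on satisfiability to an application of the shape ``$c$ is in $C_{\ab'}$ iff $c = c_{xy}$ for some $(x,y) \in \insta(\ab'(\gamma),\WRv)$ (modulo the guard premises).''

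First, I would prove by induction on the three nested loops of \myalg\ref{alg:ab} that
\[
  C_{\ab'} = C_{\nc'} \cup C_{\ac'} \cup C_{\bc'},
\]
where $C_{\nc'}$ is the set of constraints produced by lines~\ref{alg:ab:line6} and~\ref{alg:ab:line10}, $C_{\ac'}$ is produced by line~\ref{alg:ab:line9}, and $C_{\bc'}$ by line~\ref{alg:ab:line14}. The outermost loop over $S\in\po(\gamma)$ ensures that all thread-local pairs are considered; the loop on $e\in S\setminus\fences$ and $s\in\fences$ enumerates every non-fence/fence pair on the same thread; the test at line~\ref{alg:ab:line5} dispatches to either the ``$e$ before $s$'' case (non-cumulativity constraint $\guard(s)\Rightarrow c_{es}$, plus A-cumulativity $(\guard(s)\wedge s_{we})\Rightarrow c_{ws}$ for each external $\rfe$-predecessor $w$ of $e$) or the ``$s$ before $e$'' case (symmetric with B-cumulativity). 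Matching these branches against the definitions of $\nc'$, $\ac'$, $\bc'$ gives
\[
  C_{\ab'} = \{\,\guard(s)\Rightarrow c_{xy} \mid (x,y)\in\nc'(\gamma)\,\}
           \cup \{\,(\guard(s)\wedge s_{we})\Rightarrow c_{wy} \mid \ldots\,\}
           \cup \{\,(\guard(s)\wedge s_{wr})\Rightarrow c_{xr} \mid \ldots\,\},
\]
where in each case $s$ is the unique fence witnessing the membership in $\ab'(\gamma)$.

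Second, I would establish the equivalence of satisfactions. For the ($\Rightarrow$) direction, fix $(x,y)\in\insta(\ab'(\gamma),\WRv)$; by the characterization above and the definitions of $\nc'$, $\ac'$, $\bc'$, there is a constraint in $C_{\ab'}$ of the form $P \Rightarrow c_{xy}$ whose premise $P$ is forced to true by $(\Vals,\WRv)$ (the relevant fence guard holds by definition of $\ab'(\gamma)$, and $s_{we}$ or $s_{wr}$ holds precisely because the $\rfe$ pair was selected by $\WRv$ in $\insta(\rfe(\gamma),\WRv)$). Hence $c_{xy}$ holds. For ($\Leftarrow$), every constraint in $C_{\ab'}$ is of the form $P\Rightarrow c_{xy}$ where $P$ encodes exactly the side conditions of $\nc'$, $\ac'$ or $\bc'$; if $P$ evaluates to false under $(\Vals,\WRv)$ the implication is trivially satisfied, otherwise $(x,y)\in\insta(\ab'(\gamma),\WRv)$ and $c_{xy}$ is guaranteed by the right-hand side.

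The routine part is the bookkeeping for the case split on whether $e$ precedes or follows $s$ in $\po(\gamma)$, and the observation that $\insta(\rfe(\gamma),\WRv)=\{(w,r)\in\bigcup_\alpha\WR_\alpha\mid \tr(w)\ne\tr(r)\wedge \WRv(s_{wr})=\text{true}\}$ matches exactly the existential witnesses in $\ac'$ and $\bc'$. The only delicate step is keeping the guard $\guard(s)$ of the fence aligned on both sides of the biconditional: it appears explicitly in the premise of each generated constraint, and it is built into the definitions of $\nc'(e_1,e_2)$, $\ac'(e_1,e_2)$, $\bc'(e_1,e_2)$ through the ``$\guard(\cdot)\wedge \cdot \text{ is fence}$'' clauses, so the two conditions agree pointwise.
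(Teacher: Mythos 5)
Your proposal is correct and matches the paper's intended argument: the paper omits the proof, declaring it ``immediate like for \mylem\ref{rf:sat}'', i.e., an induction over the loops of \myalg\ref{alg:ab} yielding a syntactic characterisation of $C_{\ab'}$ as guarded implications, followed by matching the premises against the definitions of $\nc'$, $\ac'$ and $\bc'$ --- exactly the two steps you carry out. Your explicit note about aligning $\guard(s)$ on both sides of the biconditional is the right delicate point to single out.
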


%
We let~$\ab(\gamma)$ be the symbolic version of~$\ab$ in~\mysec\ref{sec:model},
\ie, we let $\nc(e_1,s,e_2)$ be~$\guard(s) \wedge s \text{ is fence } \wedge
(e_1,s) \in \po(\gamma) \wedge (s,e_2) \in \po(\gamma)$, $\ac(e_1,s,e_2)$
be~$\exists r.  (e_1,r) \in \rfe(\gamma) \wedge \nc(r,s,e_2)$
and~$\bc(e_1,s,e_2)$ be $\exists w. \nc(e_1,s,w) \wedge (w,e_2) \in
\rfe(\gamma)$. We only prove this encoding sound \wrt~\mysec\ref{sec:model},
as~$\ab'$ is more fine-grained than $\ab$ (to see why, note that one cannot
express $\nc'(e_1,e_2)$ as a combination of $\nc, \ac$ or $\bc$). Yet we prove
our overall encoding complete in~\mythm\ref{sat:ac:ghb}.  
\begin{lemma}\label{ab:sat} {If $(\Clocks,\Vals,\WRv)$
satisfies $\bigwedge_{c \in C_{\ab'}} c$ then $(\Clocks,\Vals)$
satisfies $\bigwedge_{(e_1,e_2) \in \insta(\ab(\gamma),\WRv)} c_{e_1 e_2}$.}
\end{lemma}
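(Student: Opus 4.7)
The plan is to establish the implication by a direct case analysis on why $(e_1,e_2)$ lies in $\insta(\ab(\gamma),\WRv)$, using \mylem\ref{ab':sat} to translate the hypothesis into satisfied clock constraints, and then composing these via the transitivity of $<$ on $\mathbb{N}$.

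First I would invoke \mylem\ref{ab':sat} on the hypothesis: the satisfaction of $\bigwedge_{c \in C_{\ab'}} c$ implies that $(\Clocks,\Vals)$ satisfies $c_{xy}$ for every $(x,y) \in \insta(\ab'(\gamma),\WRv)$. It then suffices to show that every element $(e_1,e_2) \in \insta(\ab(\gamma),\WRv)$ is witnessed by a chain of length at most two in $\insta(\ab'(\gamma),\WRv)$ through the relevant fence event $s$, so that $c_{e_1 e_2}$ follows by composition.

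In the $\nc$ case a witness $\nc(e_1,s,e_2)$ gives $(e_1,s), (s,e_2) \in \po(\gamma)$ with $\guard(s)$ true; lines~\ref{alg:ab:line6} and~\ref{alg:ab:line10} of \myalg\ref{alg:ab} yield $\guard(s) \implies c_{e_1 s}$ and $\guard(s) \implies c_{s e_2}$, so under the valuation both $c_{e_1 s}$ and $c_{s e_2}$ hold and transitivity delivers $c_{e_1 e_2}$. In the $\ac$ case a witness $\ac(e_1,s,e_2)$ provides $r$ with $(e_1,r) \in \rfe(\gamma)$, hence $s_{e_1 r}$ true, together with $\nc(r,s,e_2)$; then line~\ref{alg:ab:line9} supplies $c_{e_1 s}$ and line~\ref{alg:ab:line10} supplies $c_{s e_2}$, and transitivity yields $c_{e_1 e_2}$. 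The $\bc$ case is symmetric, combining line~\ref{alg:ab:line6} with line~\ref{alg:ab:line14}.

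The one subtlety, which is the main obstacle, is the \texttt{lwsync} fence, for which a fence event $s$ carries two distinct clocks $\clock{s}^r$ and $\clock{s}^w$, so composition at $s$ is not automatically transitive through a single value. Here one inspects, for each direction of $e_1$ and $e_2$, which of $\clock{s}^r, \clock{s}^w$ is selected on either side: for the read-read, read-write and write-write cases that actually contribute to $\ab(\gamma)$ under \texttt{lwsync}, the two halves share a common clock variable and the chain composes as before. The only pairs for which the two halves use different clocks are the write-read pairs separated by an \texttt{lwsync}, and these are exactly the pairs excluded from $\ab$ (cf.~\mysec\ref{sec:model}), so the statement is not affected. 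This case distinction completes the proof.
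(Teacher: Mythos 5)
Your proposal is correct and follows essentially the same route as the paper's proof: decompose each $(e_1,e_2)\in\ab(\gamma)$ into the two $\ab'$ steps $(e_1,s)$ and $(s,e_2)$ through the witnessing fence, obtain $c_{e_1 s}$ and $c_{s e_2}$ from the generated constraints (the paper argues this directly from $C_{\ab'}$ where you route it through \mylem\ref{ab':sat}, an inessential difference), and compose by transitivity, with the \texttt{lwsync} two-clock subtlety resolved exactly as in the paper by checking that the $\RR$, $\RW$ and $\WW$ cases share a clock variable at $s$ while $\WR$ pairs are excluded from $\ab$ by definition. No gap.
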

\begin{proof}\noindent\em{ We give only the case $\lwsync(\gamma)$.
Take $(e_1,e_2) \in \lwsync(\gamma)$, \ie, there is an {\tt lwsync} $s$
s.t.~$\nc(e_1,s,e_2)$ or $\ac(e_1,s,e_2)$ or $\bc(e_1,s,e_2)$. In the $\nc$
case, we know that $s$ is a fence and $\guard(s)$ is true, and $(e_1,s) \in
\po(\gamma)$ and $(s,e_2) \in \po(\gamma)$, \ie, $\nc'(e_1,s)$ and
$\nc'(s,e_2)$. Thus $c_{e_1 s}$ and $c_{se_2}$ are in $C_{\ab'}$. Now, by
definition of $\lwsync(\gamma)$, $(e_1,e_2) \not\in \WR$. For $(e_1,e_2) \in
\WW$, $c_{e_1 s}$ is $\clock{e_1} < \clock{s}^{w}$ and $c_{se_2}$ is
$\clock{s}^{w} < \clock{e_2}$.  Thus $\clock{e_1} < \clock{e_2}$, \ie, $c_{e_1
e_2}$ holds.  Writing $\RR$ for the read-read pairs, take $(e_1,e_2) \in \RR$.
Thus $c_{e_1 s}$ is $\clock{e_1} < \clock{s}^{r}$ and $c_{s e_2}$ is
$\clock{s}^{r} < \clock{e_2}$. Hence $\clock{e_1} < \clock{e_2}$, \ie, $c_{e_1
e_2}$ holds. For $(e_1,e_2) \in \RW$, $c_{e_1 s}$ is $\clock{e_1} <
\clock{s}^{r}$ and $c_{s e_2}$ is $(\clock{s}^{w} < \clock{e_2}) \wedge
(\clock{s}^{r} < \clock{e_2})$.  Thus $\clock{e_1} < \clock{e_2}$, \ie, $c_{e_1
e_2}$ holds.  In the~$\ac$ case, $s$ is a fence and~$\guard(s)$ is true, and
there is~$r$ s.t.~$(e_1,r) \in \rfe(\gamma)$ and~$\nc(r,s,e_2)$. Thus~$e_1$ is
a write (source of a~$\rf$), and since~$(e_1,e_2) \not\in \WR$ (by definition
of $\lwsync(\gamma)$), $e_2$ is a write. So~$\ac'(e_1,s)$, and~$\nc'(s,e_2)$,
\ie, $c_{e_1 s}$ and~$c_{s e_2}$ hold. We are back to the~$\WW$ case.  In
the~$\bc$ case, $s$ is a fence and~$\guard(s)$ is true, and there is~$w$
s.t.~$\nc(e_1,s,w)$ and~$(w,e_2) \in \rfe(\gamma)$.  Thus~$e_2$ is a read
(target of a~$\rf$), and since~$(e_1,e_2) \not\in \WR$, $e_1$ is a read.
So~$\nc'(e_1,s)$ and~$\bc'(s,e_2)$, \ie, $c_{e_1 s}$ and~$c_{s e_2}$ hold. We
are back to the~$\RR$ case.} \end{proof}
 
\subsection{Soundness and completeness of the encoding}

Given an architecture $A$ and a program, the procedure of
\mysec\ref{sec:c-to-aes} and \mysec\ref{sec:aes-to-ces} outputs a formula~$\ssa
\wedge \parord$ and an \aes~$\gamma$. This formula provably encodes the
executions of this program valid on~$A$ and violating the property encoded in
$\ssa$ in a sound and complete way.  Proving this requires proving that any
assignment to the system corresponds to a valid execution of the program, and
vice versa. This result requires three steps, one for $\uniproc$, one for
$\thin$ and one for the acyclicity of $\ghb$. By lack of space, we show only
the last one.  Given an~\aes~$\gamma$, we write $\phi$ for $\bigwedge_{c \in
C_{\ppo} \cup C_{\grf} \cup C_{\wf} \cup C_{\ws} \cup C_{\ab'}} c$:
\begin{theorem}\label{sat:ac:ghb} {The formula~$\ssa \wedge \phi$ is satisfiable
iff there are $\Vals$, a valuation to the symbols of \ssa, and a well formed $X$
s.t.~$\ghb_A(\conc(\gamma,\Vals),X)$ is acyclic and has finite prefixes.}
\end{theorem}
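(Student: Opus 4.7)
\begin{proofnoqed}
The plan is to reduce \thm~\ref{sat:ac:ghb} to the master \mylem\ref{sat:ac} applied to a symbolic encoding of $\ghb_A$, built as the union of the symbolic relations implicitly defined by the algorithms of \mysec\ref{sec:aes-to-ces}. The pivot is the distributivity property remarked just after \mylem\ref{sat:ac}: $\phi(\textsf{r}_1 \cup \textsf{r}_2) \equiv \phi(\textsf{r}_1) \wedge \phi(\textsf{r}_2)$. Consequently, the encoding of $\ghb_A$ as a single formula is equivalent to the conjunction $\phi$ produced by the algorithms (with the thread-internal parts of $\ws$, $\rf$, and $\fr$ already absorbed into $\ppo_A$ by \myalg\ref{alg:ppo-new}), and the proof reduces to gluing the per-relation lemmas of \mysec\ref{sec:aes-to-ces}.

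For the forward direction ($\Rightarrow$), suppose $(\Clocks,\Vals,\WRv)$ satisfies $\ssa \wedge \phi$. I would first define $X = (\ws,\rf)$ by concretising the symbolic relations: take $\rf = \conc(\insta(\rf(\gamma),\WRv),\Vals)$ and read $\ws$ off the $c_{ww'}$ variables used in \myalg\ref{alg:ws}. Well-formedness of $X$ follows from \mylem\ref{rf:sat} and \mylem\ref{ws:sat}: each concrete read matches some write of equal value and address, with uniqueness of that write ensured by \mylem\ref{rf:exclusive} together with the (separately established) $\uniproc$ check, and $\ws$ is total per address. Applying in turn \mylem\ref{rf:sat}, \mylem\ref{ws:sat}, \mylem\ref{fr:sat}, \mylem\ref{ppo:sat}, and \mylem\ref{ab:sat} then gives that $(\Clocks,\Vals)$ satisfies $\phi$ of each concretised component of $\ghb_A$. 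By distributivity, $(\Clocks,\Vals)$ satisfies $\phi(\ghb_A(\conc(\gamma,\Vals),X))$, whence \mylem\ref{sat:ac} yields acyclicity and finite prefixes.

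For the backward direction ($\Leftarrow$), given $\Vals$ and well-formed $X$ with $\ghb_A(\conc(\gamma,\Vals),X)$ acyclic and admitting finite prefixes, I would apply the $\Leftarrow$ direction of \mylem\ref{sat:ac} to obtain a clock valuation $\Clocks$ witnessing the symbolic encoding of $\ghb_A$, letting $\clock{e}$ be the maximal cardinality of a prefix of $e$ in the concretised $\ghb_A$. Set $\WRv$ so that $s_{wr}$ holds iff $(w,r) \in \rf$. Each component lemma, being an iff, then yields satisfaction of the corresponding constraint set, and the $\ssa$ conjunct is discharged by well-formedness of $X$ via the value-matching clause of \mylem\ref{rf:sat}.

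The hard part will be the $\ab'$ encoding, since \mylem\ref{ab:sat} is stated only as soundness and because $\ab'$ introduces split clocks $\clock{s}^r$ and $\clock{s}^w$ per $\texttt{lwsync}$ fence $s$, which have no counterpart in the concrete $\ab_A$. My plan is to extend $\Clocks$ after the $\ghb_A$-derived part is fixed: for each $\texttt{lwsync}$ $s$, the constraints in $C_{\ab'}$ impose a finite set of lower bounds on $\clock{s}^r$ and $\clock{s}^w$ coming from events in program order before $s$ (reads and their $\rfe$-sources for $\clock{s}^r$; writes for $\clock{s}^w$) and symmetric upper bounds from events after $s$. Each such bound pair lies in $\ab_A$ in the concrete semantics, hence is already ordered consistently by the acyclic $\ghb_A$; a standard interval-insertion argument on the naturals then yields admissible values for the split clocks, closing the iff.
\end{proofnoqed}
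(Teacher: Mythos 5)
Your forward direction is essentially the paper's: the same choice of $X$, the same appeal to \mylem\ref{rf:sat}--\mylem\ref{ab:sat} plus \mylem\ref{rf:exclusive} for well-formedness, then \mylem\ref{sat:ac} via the distributivity of $\phi$ over unions. The backward direction is where you genuinely diverge, and you have correctly located the only real difficulty, namely bridging the concrete $\ab$ with the fence-event encoding $\ab'$. The paper handles it by lifting the whole problem to an auxiliary relation $\ghb'_A = \textsf{r}_1 \cup \ab'$ whose nodes include the fence events, proving that acyclicity and finite prefixes of $\ghb_A$ transfer to $\ghb'_A$ (the key observation being that two consecutive $\ab'$ edges through a fence compose to an $\ab$ edge, so any cycle or infinite prefix of $\ghb'$ collapses to one of $\ghb$), and only then invoking the $\cimplies$ direction of \mylem\ref{sat:ac} once, on $\ghb'$, so that the fence clocks fall out of the same prefix-cardinality construction as all other clocks. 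You instead fix the non-fence clocks from $\ghb_A$ first and insert the fence clocks afterwards; the combinatorial fact you rely on is the same composition property, so the route is viable, but your ``interval-insertion argument on the naturals'' fails as stated: with $\clock{e}$ taken to be the maximal prefix cardinality, a lower-bound event $x$ and an upper-bound event $y$ of a fence $s$ can have $\clock{y}=\clock{x}+1$, leaving no natural strictly between them for $\clock{s}^{w}$ (or $\clock{s}^{r}$, or the single clock of a \texttt{sync}/\texttt{mfence}). You must either rescale the clocks before inserting, move to a dense order, or---the paper's cleaner move---include the fence events as nodes of the relation handed to \mylem\ref{sat:ac}. With that repair your argument goes through; the paper's version buys a single application of \mylem\ref{sat:ac} at the cost of an explicit collapsing claim, while yours keeps \mylem\ref{sat:ac} applied to the official $\ghb_A$ at the cost of a separate, and currently under-specified, clock-extension step.
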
 \begin{proof} \noindent\em{Let
$(\Clocks,\Vals,\WRv)$ be a satisfying assignment of $\ssa \wedge \phi$. By
\mylem\ref{ppo:sat}, \ref{rf:sat}, \ref{ws:sat}, \ref{fr:sat} and \ref{ab':sat},
we know that $(\Clocks,\Vals,\WRv)$ satisfies $\phi$ iff i) for all $r$
s.t.~$\guard(r)$ is true, there is $w$ s.t.~$(w,r) \in \insta(\rf(\gamma),\WRv)$
and ii) for all $(w,r) \in \insta(\rf(\gamma),\WRv)$, $\guard(w)$ is true and
$\val(w)=\val(r)$ and iii) $(\Clocks,\Vals,\WRv)$
satisfies~$\phi(\ppo_A(\gamma)) \wedge \phi(\insta(\grf(\gamma),\WRv)) \wedge
\phi(\ews(\gamma)) \wedge \phi(\insta(\fr(\gamma),\WRv)) \wedge
\phi(\insta(\ab'(\gamma),\WRv))$.
 
$\implies$: 
Take $X=(\conc(\insta(\rf(\gamma),\WRv),\Vals)),(\conc(\ws(\gamma),\Vals))$.
Note that i) and ii), together with \mylem\ref{rf:exclusive}, imply that
$\rf(X)$ is well formed. For $\ws(X)$, this comes from the totality of
$\ws(\gamma)$ over writes to the same address, implied by the shape of
$C_{\ws}$ (\cf \myalg\ref{alg:ws}) for the external $\ws$, and by the totality
of $\po(\gamma)$ for the internal $\ws$.  

By \mylem\ref{ab':sat} and \ref{ab:sat}, iii) says that
$(\Clocks,\Vals,\WRv)$ satisfies $\phi(\textsf{r})$, with
$\textsf{r}=\ppo_A(\gamma) \cup \insta(\grf_A(\gamma),\WRv) \cup \ws(\gamma)
\cup \insta(\fr(\gamma),\WRv) \cup \insta(\ab(\gamma),\WRv)$. By \mylem\ref{sat:ac}, since
$\ghb_A(\conc(\gamma,\Vals),X)$ is $\conc(\textsf{r},\Vals)$, we have our
result.
%

$\cimplies$: 
We let $E$ be $\conc(\gamma,\Vals)$.  Take $\WRv$ s.t.~$s_{wr}$ is true iff
$(w,r) \in \rf(X)$.  $\rf(X)$ being well-formed implies i) and ii). 

We let $\ghb'_{A}(E,X)$ be $\textsf{r}_1 \cup \ab'(E,X)$, with
$\textsf{r}_1=\ppo_A(E) \cup \grf_A(X) \cup \ws(X) \cup \fr(E,X)$. Note that
$\ghb_{A}(E,X)$ is $\textsf{r}_1 \cup \ab(E,X)$. We show below that the
acyclicity of $\ghb_{A}(E,X)$ implies the acyclicity of $\ghb'_{A}(E,X)$
(\emph{idem} for finite prefixes).  Then we take $\textsf{r}=\ghb'_{A}(E,X)$ in
\mylem\ref{sat:ac}, and take $\Clocks$ as in \mylem\ref{sat:ac}. Hence we have
$(\Clocks,\Vals)$ satisfying $\phi(\ppo_A(\gamma)) \wedge
\phi(\insta(\grf(\gamma),\WRv)) \wedge \phi(\ews(\gamma)) \wedge
\phi(\insta(\fr(\gamma),\WRv)) \wedge \phi(\insta(\ab'(\gamma),\WRv))$, namely
iii); our result follows.

We let $\textsf{r}_2=\{(e_1,e_2) \in \ab';\ab' \mid \text{neither } e_1 \text{
nor } e_2 \text{ is a fence}\})$. We write $(x,y) \in \textsf{r};\textsf{r'}$
for $\exists z. (x,z) \in \textsf{r} \wedge (z,y) \in \textsf{r'}$. 
%
One can show: $(*)$ if $(e_1,e_2) \in \transc{\ghb'}$, we have $(e_1,e_2)$
in $\ab';\transc{(\transc{\textsf{r}_1} \cup \transc{\textsf{r}_2})}$, or
$\transc{(\transc{\textsf{r}_1} \cup \transc{\textsf{r}_2})};\ab'$, or
$\transc{(\transc{\textsf{r}_1} \cup \transc{\textsf{r}_2})}$. 

Acyclicity: by contradiction, take a cycle in $\ghb'_{A}(E,X)$, \ie, $x$
s.t.~$(x,x) \in \transc{(\ghb'_{A}(E,X))}$. In the first two cases $(*)$,
$\ab'$ connects two non-fence events, a contradiction.  Hence a cycle in
$\ghb'$ implies one in $\textsf{r}_1 \cup \textsf{r}_2$, \ie, in $\ghb$
since $\textsf{r}_2 \subseteq \ab$.

Prefixes: as a contradiction, take an infinite path in
$\transc{\ghb'_{A}(E,X)}$.  Only the cases $\transc{(\transc{\textsf{r}_1} \cup
\transc{\textsf{r}_2})}$ and $\ab';\transc{(\transc{\textsf{r}_1} \cup
\transc{\textsf{r}_2})}$ of $(*)$ apply, and both imply an infinite path in
$\transc{(\transc{\textsf{r}_1} \cup \transc{\textsf{r}_2})}$.  Hence, we have
an infinite prefix in $\transc{\ghb}$, since $\textsf{r}_2 \subseteq \ab$.  }
\end{proof}

To decide the satisfiability of $\phi$, we can use any solver supporting a
sufficiently rich fragment of first-order logic.  The procedure reveals the
concrete executions, as expressed by \mythm\ref{sat:ac:ghb}.

\subsection{Comparison to \cite{bam07} and \cite{sw10,sw11}}
\label{sec:cf-s-w-comparison}
\Daniel{A comparison can't seriously be this elaborate -- terminology
necessary should be elsewhere, and enable a much more succinct comparison!}
Both~\cite{bam07} and~\cite{sw10,sw11} use an SSA encoding similar to our
$\ssa$ of \mysec\ref{sec:c-to-aes}.  The difference resides in the ordering
constraints.

\cite{bam07} encodes total orders over memory accesses. Thus, in contrast to
our clock variables with less-than constraints, \cite{bam07}~uses a Boolean
variable~$M_{xy}$ per pair $(x,y)$, whose value places $x$ and $y$ in a total
order: either $x$ before $y$, or $y$ before $x$. \myprog\ref{prog:fib} has
$3\cdot \text{\lstinline{N}}$ memory accesses per thread, hence \cite{bam07}'s
encoding has $6 \cdot \text{\lstinline{N}} \cdot (6 \cdot \text{\lstinline{N}}
-1)$ Boolean variables. \cite{bam07} builds additional constraints for the
transitive closure; their number is at least cubic in the number of
variables~$M_{xy}$, leading to $\mathcal{O}(\text{\lstinline{N}}^6)$
constraints. 

We only consider relations per address, except for program order and fence
orderings, and do not build transitive closures. The constraints for $\fr$ and
$\ab$ are cubic in the worst case; all others are quadratic. In
\myprog\ref{prog:fib}, the write serialisation is internal, hence $\fr$ is only
quadratic. Hence our number of constraints is
$\mathcal{O}(\text{\lstinline{N}}^2)$. 

\cite{sw10,sw11} use partial orders like us; they note redundancies in their
constraints in~\cite{sw11} but do not explain them, which we do below.
Basically, \cite{sw10,sw11} quantify over all events regardless of their
address, whereas we mostly build constraints per address.
\myfig\ref{fig:facts} shows that the maximal number of events to a single
address is experimentally much smaller than the total number of events.

Our notations correspond to the ones of~\cite{sw11} as follows (the original
description~\cite{sw10} has different notations).  $\HB(a,b)$ is our clock
constraint $c_{ab}$.  The functions $\loc$ and $\val$ map to ours; $\en(x)$ is
our $\guard(x)$; $\link(r,w)$ denotes that~$r$ reads from~$w$, \ie, our
$s_{wr}$.  \cite{sw11} expresses $\po$, $\rf$, $\fr$, and $\ws$ as follows
(since it is restricted to SC, it gives no encoding of $\ppo_A$, $\grf_A$, and
$\ab_A$).

\cite{sw11} encodes $\po$ as the conjunction of the $c_{a_i a_j}$, with $a_i$
in $\po$ before $a_j$. If the implementation of~\cite{sw11} strictly follows
this definition, it redundantly includes the transitive closure constraints,
which we avoid by building the transitive reduction in
\myalg\ref{alg:ppo-new}.

\cite{sw11} encodes $\rf$ in $\Pi_1 := \forall r.  \exists w. \guard(r)
\implies (\guard(w) \wedge s_{wr})$ and $\Pi_2 := \forall r .  \forall w .
s_{wr} \implies (c_{wr} \wedge \loc(r) = \loc(w) \wedge \val(r) = \val(w))$.
\cite{sw11} forces $\rf$ to be exclusive.  
We explained in \mysec\ref{pord:rf} why this is unnecessary in our case, which
allows us to only build a disjunction over writes (\cf \myalg\ref{alg:rf})
linear in their number.  
\IGNORE{JADE M@J: we probably should, but I think it
will make very little difference. What is really really crucial is that they
have their quantifiers include all events, not just those to the same address
J@MICHAEL: we need to compare these two encodings in the experiments}
\IGNORE{MICHAEL: I really think we should, as I've been asked the following:
does adding a constraint to enforce uniqueness help the solver though?  (I've
built SMT formulas where adding redundant constraints improves solve times.
Though i'd agree that leaning towards smaller formulas is better.)}
 
$\Pi_2$ combines our value and clock constraints, with one major difference:
$\Pi_2$ ranges over all reads and writes, regardless of their address.  Our
$\rf$ (\myalg\ref{alg:rf}) ranges over pairs to the same address, thus reaches
this number only when all reads and writes have the same memory address, which
is unlikely in non-trivial programs.  

Ranging over the same address, as we do, and not all addresses, as in
\cite{sw11}, becomes even more advantageous in $\Pi_3$, encoding~$\fr$: $\Pi_3
:= \forall r .  \forall w .  \forall w'. (\s_{wr} \implies (\guard(w') \wedge
\neg c_{w'w} \wedge \neg c_{rw'} \implies \loc(r) \neq \loc(w'))$.
$\Pi_3$ ranges over all $(r,w,w')$, again independently of their addresses. For
distinct addresses the conjunction holds trivially, but \cite{sw11} builds it
nevertheless.  Our $\fr$ (\cf \myalg\ref{alg:fr}) quantifies only over the
same address, thus spares these trivial constraints.

\cite{sw11} does not encode~$\ws$. The totality of~$\ws$ comes as a side
effect: \cite{sw11} initialises each write with a unique integer, hence writes
are totally ordered by $<$ over integers. This is again regardless of the
addresses, whereas we order writes to the same address only. 

\section{Experimental Results\label{sec:experiments}}
We detail here our experiments, which indicate that our technique is scalable
enough to verify non-trivial, real-world concurrent systems code, including the
worker-synchronisation logic of the relational database PostgreSQL, 
code for socket-handover in the Apache httpd, and the core API of the
Read-Copy-Update mutual exclusion code from Linux~3.2.21. 

We implement our technique within the bounded model checker CBMC~\cite{ckl04},
using a SAT solver as an underlying decision procedure. We see two primary
comparison points to estimate the overhead introduced by the partial order
constraints. First, we pass the benchmarks with a single, fixed interleaving to
sequential CBMC. Our implementation performs comparably to sequential CBMC, as
\myfig\ref{fig:facts} shows (rows ``sequential'' and ``concurrent''). Second, we
compare to ESBMC~\cite{cf11}, which also implements bounded model checking, but
uses interleaving-based techniques.

In \myfig\ref{fig:facts}, we gather facts about all examples: the Fibonacci
example from \cite{b12} with \lstinline{N=5}, $4500$ litmus tests (see
below), the worker synchronisation in PostgreSQL, RCU, and
fdqueue in Apache httpd.  For each we give the number of lines of code (LOC),
the number of distinct memory addresses ``tot.~addr'' (including unused
shared variables), the total number of
shared accesses ``tot.~shared'', the maximal number of accesses to a single
address ``same addr'', the total number of constraints ``all constr'' and the
relation with the most costly encoding, in terms of the number of constraints
generated.  We give the loop unrolling bounds ``unroll'': we write ``none''
when there is no loop, and ``bounded'' when the loops in the program are
natively bounded. 

The total number of shared accesses is on average $13$ times the maximal number
of accesses to a single address.  The most costly constraint is usually the
read-from, or the barriers, which build on read-from. The time needed by our
tool to analyse a program grows with the total number of constraints generated.
ESBMC is $4$ times slower than our tool on Fibonacci, $3050$ times slower on
the litmus tests, times out on PostgreSQL, and cannot parse RCU and Apache.

\begin{figure}
\begin{center}
\scalebox{0.9}{
\begin{tabular}{c|c|c|c|c|c}
            & Fibo. & Litmus & PgSQL & RCU   & Apache \\ \hline
LOC         &  41   & 50.9   & 5412  & 5834       & 28864  \\
unroll & 5     & none & 2    & bounded & 5 \\
tot.~addr &  2           & 11.8   & 6            & 3          & 8      \\
tot.~shared &  45          & 58.7   & 233          & 107        & 88     \\
same addr  &  11          &  3.7   & 72           & 4          & 5      \\
all constr &  308         & 874    & 3762         & 90         & 160    \\
most costly &  $\rf$ (178) & $\ab$ (342)       & $\rf$ (1868) & $\rf$ (33) & $\rf$ (49) \\
sequential   &  0.3\,s      & 0.1\,s       & 4.1\,s       & 0.8\,s     & 1.7\,s \\
concurrent  &  3.3\,s      &  0.2\,s      & 90.0\,s      & 1.0\,s     & 2.8\,s 
\\
ESBMC       &  13.8\,s   & 609.8\,s     &  t/o            & parse err           &  parse err
\end{tabular}}
\end{center}
\caption{Facts about all examples\label{fig:facts}}
\end{figure} 

\begin{figure*}
\scalebox{0.8}{
\begin{tabular}{l|c|c|c|c|c|c|c|c|c}
   &  CBMC        &  CBMC         &  CBMC        &  CheckFence &  CImpact &  ESBMC    &  Poirot        &  SatAbs &  Threader \\ \hline
   &  SC          &  TSO          & Power        &  SC, TSO    &  SC      &  SC       &  SC            &  SC     &  SC        \\
 F &  CE $N=300$  &  CE $N=220$   & CE $N=240$   & conv err    & t/o $N=1$& CE $N=10$ & fails $N\geq1$ & V $N=3$ & t/o $N=1$   \\
 L &  100\%       &  100\%        & 100\%        & 18\%        & 20\%     & 34\%      & 47\%           & 100\%   & 8\%   \\
 P &  V           &  V            & CE           &  conv err   &  aborts  & t/o       &  parse err     & t/o       & n/a        \\
 Pf & V           &  V            & V            &  conv err   &  aborts  & t/o       &  parse err     & t/o      & n/a        \\
 R &  V           &  V            & V            &  conv err   &  aborts  & parse err &  parse err     & ref~err & n/a    \\
 A &  V           &  V            & V            &  conv err   &  aborts  & parse err &  parse err     &  aborts    & n/a   
\end{tabular}}
\caption{Comparison of all tools on all examples \label{fig:all-tools} (time out $30$\,mins)}
\vspace*{-5mm}
\end{figure*}


\paragraph{Other tools}
There are very few tools for verifying concurrent C programs, even on
SC~\cite{dkw08}.  For weak memory, existing techniques are restricted to TSO,
and its siblings PSO and RMO~\cite{bam07,kvy10,kvy11,abp11,aac12,lnp12}.  Not
all of them have been implemented, and only few handle systems code given as C
programs.

Thus, as a further comparison point, we implemented an instrumentation
technique~\cite{instrumentation_paper}, similar to~\cite{abp11}. The technique of \cite{abp11} is restricted
to TSO, and consists in delaying writes, so that the SC executions of the
instrumented code simulate the TSO executions of the original program. Our
instrumentation handles all the models of \mysec\ref{sec:model}. 

We tried~$5$ ANSI-C
model checkers: SatAbs, a verifier based on predicate abstraction~\cite{cks05};
ESBMC; CImpact, a variant of the Impact algorithm~\cite{m06} extended to SC
concurrency; Threader, a thread-modular verifier~\cite{gpr11}; and Poirot, which
implements a context-bounded translation to sequential programs~\cite{lr09}.
These tools cover a broad range of techniques for verifying SC programs.  We
also tried CheckFence~\cite{bam07}.

In \myfig\ref{fig:all-tools}, we compare all tools on all examples: F for
\myprog\ref{prog:fib}, L for the litmus tests, P for PostgreSQL with its bug,
Pf for our fix, R for RCU and A for Apache. For L, P, R and A, the bounds are
as in \myfig\ref{fig:facts}; for Pf we take the one of P. For F we try the
maximal \lstinline{N} that the tool can handle within the time out of $30$\,
mins. For each tool, we specify the model below.  
We write ``t/o'' when there is a timeout. We write ``fail'' when the tool gives
a wrong answer. CheckFence provides a conversion module from C to its internal
representation; we write ``conv err'' when it fails.  We write ``parse err''
when the tool cannot parse the example. 
SatAbs uses a refinement procedure; we write ``ref err'' when it fails.
When a tool verifies an example we write ``V''; when it finds a counterexample
we write ``CE''. 

\paragraph{Fibonacci}
All tools, except for ESBMC, SatAbs and ours, fail to analyse
Fibonacci.  Poirot claims the assertion is violated for any
\lstinline{N}, which is not the case for $1\leq\text{\lstinline{N}}\leq5$.
SatAbs does not reach beyond $\text{\lstinline{N}}=4$.  Our tool handles more
than $\text{\lstinline{N}}=300$, which is $30$ times more loop unrolling than
ESBMC, within the same amount of time.

\paragraph{Litmus tests}
We analyse $4500$ tests exposing weak memory artefacts, \eg, instruction
reordering, store buffering, store atomicity relaxation. These tests are
generated by the \prog{diy} tool~\cite{ams12}, which generates assembly
programs with a final state unreachable on SC, but reachable on a weaker model.
For example, \ltest{iriw} (\myfig\ref{fig:iriw}) can only be reached on RMO (by
reordering the reads) or on Power (\emph{idem}, or because the writes are
non-atomic).

We convert these tests into C code, of $50$ lines on average, involving $2$ to
$4$ threads.  Despite the small size of the tests, they prove challenging to
verify, as \myfig\ref{fig:all-tools} shows: most tools, except Blender, SatAbs
and ours, give wrong results or fail in other ways on a vast majority of tests,
even for SC.  For each tool we give the average percentage of correct
results over all models.  Our tool verifies all tests on all models in
$0.22$\,s on average.

\paragraph{PostgreSQL}
Developers observed that a regression test failed on a PowerPC
machine\footnote{http://archives.postgresql.org/pgsql-hackers/2011-08/msg00330.php},
and later identified the memory model as possible culprit: the processor could
delay a write by a thread until after a token signalling the end of this
thread's work had been set. Our tool confirmed the bug, and proved a patch we
proposed. A detailed description of the problem is
in~\cite{instrumentation_paper}.

\paragraph{RCU}
\emph{Read-Copy-Update} (RCU) is a synchronisation mechanism of the Linux
kernel, introduced in version 2.5.  Writers to a concurrent data structure
prepare a fresh component (\eg, list element), then replace the existing
component by adjusting the pointer variable linking to it.  Clean-up of the old
component is delayed until there is no process reading.

%
Thus readers can rely on very lightweight (and thus fast) lock-free
synchronisation only.  The protection of reads against concurrent writes is
fence-free on x86, and uses only a light-weight fence ({\tt lwsync}) on Power.
We verify the original implementation of the 3.2.21 kernel for x86 ($5824$
lines) and Power ($5834$ lines) in less than $1$\,s, using a harness that
asserts that the reader will not obtain an inconsistent version of the
component. On Power, removing the {\tt lwsync} makes the assertion fail.
 
\paragraph{Apache}
The Apache httpd is the most widely used HTTP server software. It supports
a broad range of concurrency APIs distributing incoming requests to a pool
of workers. 

%
%
The fdqueue module ($28864$ lines) is the central part of this mechanism, which
implements the hand-over of a socket together with a memory pool to an idle
worker.  The implementation uses a central, shared queue for this purpose.
Shared access is primarily synchronised by means of an integer keeping track of
the number of idle workers, which is updated via architecture-dependent
compare-and-swap and atomic decrement operations.  Hand-over of the socket and
the pool and wake-up of the idle thread is then coordinated by means of a
conventional, heavy-weight mutex and a signal.  We verify that hand-over
guarantees consistency of the payload data passed to the worker in $2.45$\,s on
x86 and $2.8$\,s on Power.  
\section{Conclusion}

Our experiments demonstrate that weakness is a virtue for programs with bounded
loops. Our proofs suggest that this contention is not limited to bounded loops,
but impracticable as is, since it involves infinite structures. Thus
we believe that this work opens up new possibilities for over-approximation for
programs with unbounded loops, which we hope to investigate in the future.

\bigskip

\emph{Acknowledgements}
We would like to thank Lihao Liang and Alex Horn for their detailed comments on
earlier versions of this paper.

\bibliographystyle{splncs03}
\bibliography{papers/aes}
\end{document}